\documentclass[a4paper,UKenglish,cleveref, autoref, thm-restate]{lipics-v2021}

\usepackage{todonotes}
\usepackage{tikz,url}
\usepackage{tikz-cd}
\usetikzlibrary{arrows,automata,decorations,patterns}
\usepackage{amsmath,amssymb,amsthm}
\usepackage{epsfig}
\usepackage{mathrsfs}
\usepackage{mathpartir}
\usepackage{tikzit}

\tikzstyle{pinkvertex}=[fill={magenta!40}, draw=black, shape=circle, thick, inner sep=1pt, minimum size=14pt]
\tikzstyle{bluevertex}=[fill={cyan!60}, draw=black, shape=circle, thick, inner sep=1pt, minimum size=14pt]
\tikzstyle{yellowvertex}=[fill=yellow, draw=black, shape=circle, thick, inner sep=1pt, minimum size=14pt]
\tikzstyle{empty node}=[fill=white, shape=circle, inner sep=1pt]

\tikzstyle{filled path}=[-, fill={blue!60}, semithick, fill opacity=0.15]
\tikzstyle{very thick edge}=[-, very thick]
\tikzstyle{arrow}=[->, >=stealth, thick]
\tikzstyle{double arrow}=[<->, thick, >=stealth]
\tikzstyle{fill highlight}=[-, fill={orange!70}, fill opacity=0.5]
\tikzstyle{fill red}=[-, fill=red, fill opacity=0.2]
\tikzstyle{crossed section}=[-, pattern=north west lines]

\usepackage{macros}

\title{Distributed Knowledge in Simplicial Models}

\author{\'Eric Goubault}{LIX, CNRS, \'Ecole Polytechnique, Institut Polytechnique de Paris, Paris, France}{eric.goubault@polytechnique.edu}{https://orcid.org/0000-0002-3198-1863}{}

\author{J\'er\'emy Ledent}{Université Paris Cité, CNRS, IRIF, F-75013, Paris, France}{jeremy.ledent@irif.fr}{https://orcid.org/0000-0001-7375-4725}{}

\author{Sergio Rajsbaum}{Instituto de Matem\'aticas, UNAM, CDMX 04510, Mexico }{rajsbaum@im.unam.mx}{https://orcid.org/0000-0002-0009-5287}{}

\authorrunning{\'E. Goubault, J. Ledent and S. Rajsbaum}

\Copyright{\'Eric Goubault, J\'er\'emy Ledent and Sergio Rajsbaum}

\ccsdesc[500]{Theory of computation~Modal and temporal logics} 

\keywords{Epistemic logic, Simplicial complexes, Distributed computing}

\category{} 

\nolinenumbers 

\hideLIPIcs  

\EventEditors{}
\EventNoEds{0}
\EventLongTitle{}
\EventShortTitle{}
\EventAcronym{}
\EventYear{}
\EventDate{}
\EventLocation{}
\EventLogo{}
\SeriesVolume{}
\ArticleNo{}

\begin{document}
\maketitle

\begin{abstract}
The usual semantics of multi-agent epistemic logic is based on Kripke models, defined in terms of binary relations on a set of possible worlds. Recently, there has been a growing interest in using simplicial complexes rather than graphs, as models for multi-agent epistemic logic.

This approach uses agents' \emph{views} as the fundamental object instead of worlds. A set of views by different agents about a world forms a {simplex}, and a set of simplexes defines a simplicial complex, that can serve as a model for multi-agent epistemic logic.
This new approach reveals topological information that is implicit in Kripke models, because the binary indistinguishability relations are more clearly seen as $n$-ary relations in the simplicial complex.

This paper, written for an economics audience, introduces simplicial models to non-experts and connects distributed computing, epistemic logic and topology.
Our focus is on \emph{distributed  knowledge} and its fixed point, \emph{common distributed knowledge}. These concepts arise when considering the knowledge that a group of agents would acquire, if they could communicate their local knowledge perfectly. While common knowledge has been shown to be related to consensus, we illustrate how distributed knowledge is related to a task weaker to consensus, called \emph{majority consensus}.

We describe three models of communication, some well-known (immediate snapshot), and others less studied (related to broadcast and test-and-set). When majority consensus is solvable, we describe the distributed knowledge that is used to solve it. When it is not solvable, we present a \emph{logical obstruction}, a formula that should always be known according to the task specification, but which the players cannot know.
\end{abstract}

\section{Introduction}

Logics for reasoning  about knowledge and belief in multi-agent systems are of particular interest to distributed systems since the early 1980's. The fundamental role of notions such as common knowledge and their relationship with coordination tasks such as consensus has been thoroughly studied~\cite{fagin,Moses2016}.

One of the earliest formalizations of the logic of knowledge was Hintikka's possible-worlds semantics~\cite{hintikka:1962}.
In this framework, an epistemic situation is modeled as a set of possible worlds, together with a binary relation connecting the worlds that are indistinguishable from the point of view of some agent.
Given such a structure, an agent \emph{knows} a fact precisely when this fact is true in all the worlds that the agent considers possible, based on their partial information.
Nowadays, this point of view is formalized as a Kripke model where all relations are equivalence relations, which can be axiomatized by the logic \textbf{S5}~\cite{fagin}.
Since then, epistemic logic has been interpreted on a wide range of models, including rough sets~\cite{BanerjeeK07}, neighbourhood structures~\cite{chellas:1980,pacuit:2017}, subset space models~\cite{DabrowskiMP96}, topological spaces~\cite{parikhetal:2007,aybuke.phd:2017}, and the central topic of this paper, simplicial complexes~\cite{gandalf-journal,ledent:2019}.
Simplicial complexes arise when we move from using worlds as the primary object, to \emph{perspectives} about possible worlds.
That is, we shift from a global approach to a local one: the primary objects are now the agents' points of view about the world.
The notion of possible world can be derived from this, as a set of compatible local views, that is a simplex of the simplicial complex.

\paragraph*{From global states to local states}

In a distributed system, each agent holds a private value called its \emph{local state}, which encodes all the information that this agent has observed during the current computation.
A \emph{global state}, on the other hand, is an abstraction collecting all the compatible local states of all participating agents.
This approach led to a fruitful theory of distributed computing based on simplicial complexes (see~\cite{herlihyetal:2013} for an overview).
Remarkably, it was shown that there are topological invariants that are preserved while the agents communicate with each other, that in turn determine which distributed tasks can be solved, or how fast they can be solved.

The same idea can be applied to multi-agent epistemic logic, via the notion of \emph{simplicial model}~\cite{gandalf-journal}.
Instead of thinking in terms of possible worlds and indistinguishability relations between the worlds, the focus is first and foremost on modeling the local states of the agents.
Then, a global state for~$n$ agents can be decomposed as a set $\{\ell_1, \ldots, \ell_n\}$, consisting of one local state for each agent.
In a simplicial model, each of these local states is modeled separately as a \emph{vertex} of a simplicial complex.
A global state corresponds to a set of $n$ vertices, that is, an $(n-1)$-dimensional \emph{simplex}.

Simplicial models are thus geometric in nature, each world being a higher-dimensional cell, and they can be seen as discrete approximations of topological spaces~\cite{Hatcher}.
Also, the way in which  worlds are connected tells us information about the knowledge of the agents.
For instance, two global states are indistinguishable by some agent when it has the same local state in both of them. In a simplicial model, this fact is represented by a situation where one vertex (the local state of the agent) is shared between two different simplexes (the two global states).
More generally, simplicial complexes encode explicitly higher order indistinguishability relations;   
when two global states are indistinguishable by $k$ agents,  the fact is represented by a situation where $k$ vertices are  shared between two  simplexes.
Indistinguishability plays a central role in fault-tolerant distributed computing, and in other areas of computer science~\cite{Indistinguishability}, but also, data science~\cite{TDA} and in a variety of biological and social systems~\cite{BATTISTON20201}. More and more attention has been devoted to the higher-order architecture of such complex systems, as increasing evidence emerges  of how their higher-order structure determines their dynamical behavior.

Various aspects of simplicial models have been recently developed in a number of papers~\cite{Armenta-SeguraL22,BilkovaDKR24,CachinLS25,CACHINsynergistic2025,Castaneda22pattern,Ditmarsch2020KnowledgeAS,goubaultSemisimplicialSetModels2023,faultAgentsBoletin2024,hans2and3,DitmarschGLLR21,Ditmarsch24deadL}, including distributed knowledge, bisimulations and covering spaces, belief, more general hypergraph and simplicial sets, faulty agents, multivalued semantics, dynamics, synergistic knowledge, approximate agreement. Several workshops  have been devoted to the topic, including Dagstuhl~\cite{dagstuhl-CastanedaDKM023} and CELT associated to LICS 2022, see~\url{https://easychair.org/cfp/CELT2022}.

\paragraph*{A classic example: the muddy children puzzle}

To give an intuitive understanding of how simplicial complexes are used to model epistemic situations, let us revisit the classic epistemic logic muddy children puzzle.

\begin{center}
\begin{minipage}{0.9\textwidth}
\emph{Three children are playing in a schoolyard. Some of them have mud on their foreheads. They can all see the mud on each other's forehead, but not on their own. The teacher announces, ``At least one of you has mud on their forehead.''
Then, she asks, ``Raise your hand if you know that you have mud on your forehead.''
No one raises their hand.
The teacher repeats the same question, but still no one raises their hand.
The third time, however, all three children correctly raise their hands.
How did they manage to reach this conclusion?
}
\end{minipage}
\end{center}

Our goal here is to analyze this puzzle using a simplicial model,  and
contrasting it  to the classic Kripke model approach, depicted in Figure~\ref{fig:muddyc-Kripke}.  
We stress that both models carry the same information, formally as dual categories~\cite{gandalf-journal}, but we will try to convince the reader of the benefits of working with the simplicial version. In distributed computing a similar discussion  has been described~\cite{herlihyetal:2013}.
\begin{figure}[h]
\centering
\includegraphics[scale=0.4]{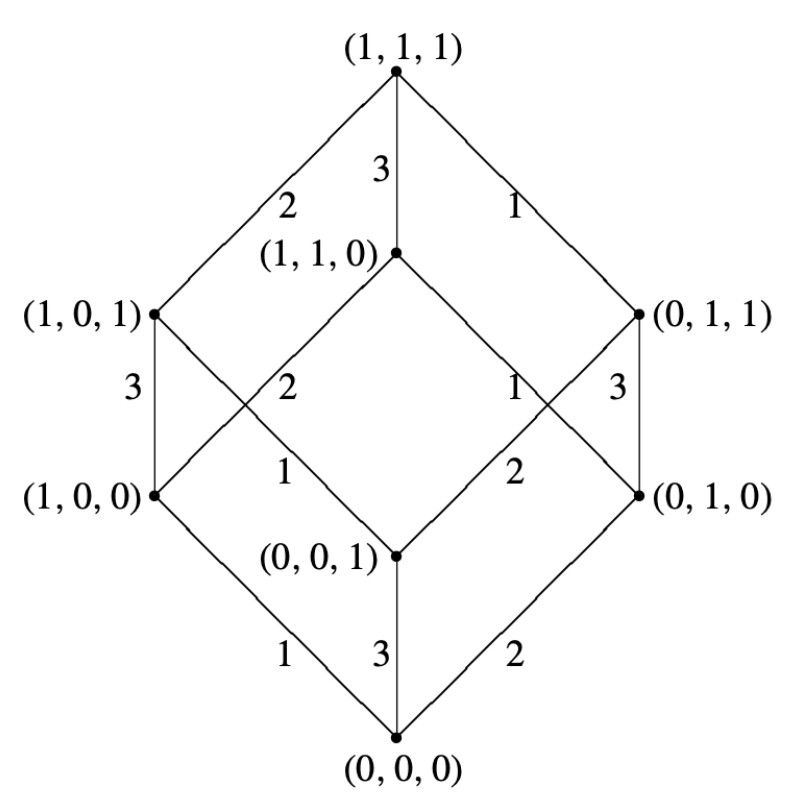}
\caption{The muddy children puzzle:  Kripke model figure from~\cite[Figure 2.3]{halpernmoses:1990}.}
\label{fig:muddyc-Kripke}
\end{figure}

Since we consider $3$ agents (the three children --- we do not model the teacher), the simplicial complex will be of dimension~$2$. That is, each possible world will be represented by a $2$-simplex, i.e., a triangle\footnote{In the whole paper, all of our examples will have only $3$ agents, to make the presentation more accessible, and to be able to illustrate them with pictures. In the general case, to model epistemic situations with~$n$ agents, we need to use $(n-1)$-dimensional complexes}.
In the models depicted in \cref{fig:muddyc}, the three children are represented as colors: pink, blue, and yellow.
In the initial model (left), each child has four vertices: they correspond to the four possible local states of a child.
Indeed, each child can see the forehead of the two other children, giving four possible combinations: $(\text{clean}, \text{clean})$, $(\text{clean}, \text{muddy})$, $(\text{muddy}, \text{clean})$, $(\text{muddy}, \text{muddy})$.
For simplicity, let us use the~`$0$' symbol to mean clean, and `$1$' to mean muddy.
These twelve local states ($4$ for each child) can be assembled into a simplicial model by considering the global states.
There are eight global states, or possible worlds: $000$, $001$, $010$, $011$, $100$, $101$, $110$, $111$.
Each global state indicates the status (clean or muddy) of each child (pink, blue, yellow), from left to right in that order.
A global state is represented as a triangle, linking together three compatible local states.
Notice how the way those triangles are linked together gives information about the agents' uncertainty about the world.
For instance, the worlds $000$ and $100$ share the same pink-colored vertex.
This indicates that, from the local point of view of the pink agent, who can only see that the other two children are clean, both situations are considered possible.

\begin{figure}[h]
\includegraphics[scale=0.2]{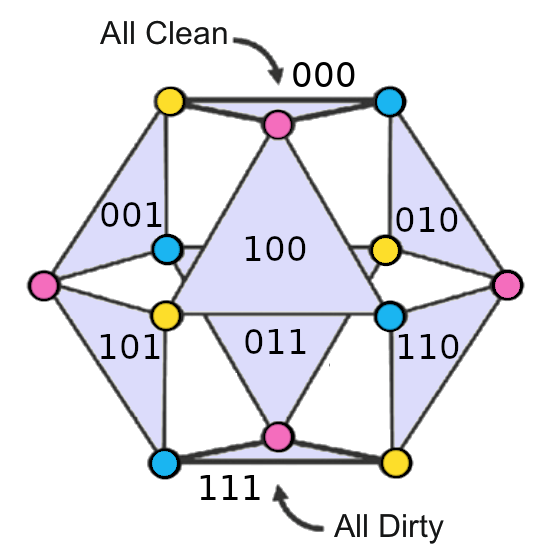}
\hfill
\includegraphics[scale=0.2]{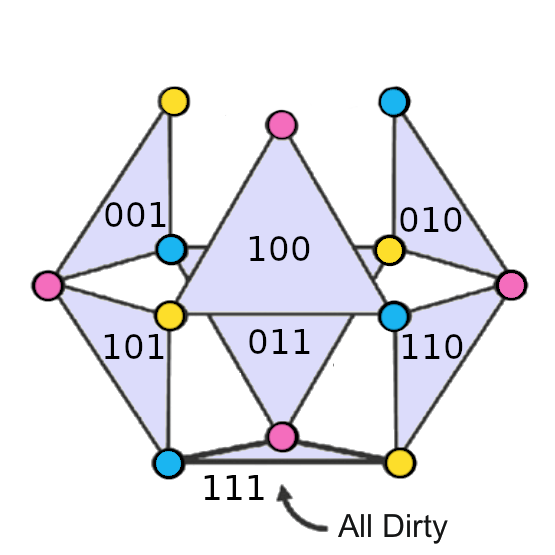}
\hfill
\includegraphics[scale=0.2]{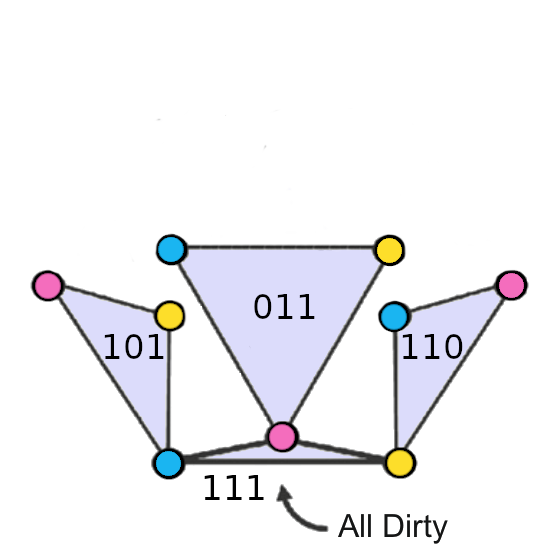}
\caption{The muddy children puzzle: initial situation (left), after the teacher announcement (middle), and after the teacher's first question (right).}
\label{fig:muddyc}
\end{figure}

The model on the left of \cref{fig:muddyc} describes the initial situation of the puzzle.
After the teacher announces that \emph{``At least one of you has mud on their forehead''}, the world $000$ is eliminated. This corresponds to the model depicted in the middle of \cref{fig:muddyc}.
In this new model, there is \emph{common knowledge} that at least one child is muddy.
If the actual world was $100$, then the pink agent would now \emph{know} that she has mud on her forehead, since she can see that the other two children are clean.
Similarly, in the worlds $010$ and $001$, the blue (resp., yellow) agent know that they are muddy.

However, when the teacher asks to \emph{``raise your hand if you know that you are muddy''}, no one raises their hand.
Thus, we learn that the actual world is neither $100$, $010$, nor $001$.
The new model is the one depicted on the right of \cref{fig:muddyc}.
We can iterate the same reasoning when the teacher asks the second question, and further eliminate the worlds where some children would know that they are muddy. After the second question, only the world $111$ where all children are muddy remains.
Therefore, the third time the teacher asks the question, all three children are able to raise their hand.

\paragraph*{Distributed Knowledge}

Various notions of \emph{group knowledge} have been studied in the epistemic logic literature.
For instance, given a set of agents $A$, the formula ``everyone among $A$ knows $\phi$'' says that each individual agent $a \in A$ knows the formula $\phi$.
This is easily expressible in propositional epistemic logic, as a conjunction of formulas of the form ``agent $a$ knows $\phi$'', for each $a \in A$.

This is to be contrasted with the \emph{distributed knowledge} operator, which is strictly more expressive than individual knowledge.
The notion of distributed knowledge has been studied since the 90's in the context of distributed systems~\cite{HalpernM90}.
Intuitively, a group~$A$ of agents has distributed knowledge of a formula~$\phi$, when~$\phi$ can be deduced by aggregating all the local information held by all the agents in~$A$.
In other words, this new logical operator describes the knowledge that could be achieved if a group of agents were able to  communicate perfectly.
Thus, it may be the case that a group of agents have distributed knowledge of some formula~$\phi$, even though no individual agent in that group knows~$\phi$.

Interestingly, recent developments in simplicial complex models for epistemic logic have shown the importance of distributed knowledge, as a higher dimensional version of knowledge.
While individual knowledge is tied to $1$-dimensional connectivity (being able to move from one world to another via shared vertices), distributed knowledge among a group of~$k$ agents is related to the $k$-dimensional connectivity of the model (i.e., moving from one world to another via shared edges, triangles, or in general $(k-1)$-dimensional  cells)\footnote{Not to be confused with the notion of  $k$-connectivity of a topological space, defined in terms of extending maps from a sphere to its interior.}.
One of the first works to notice and make use of the connection between distributed knowledge and topology was the logical obstruction to the solvability of the $k$-set-agreement task by  Nishimura~\cite{Nishimura24}.
But of course the importance of $k$-connectivity has been noticed before, such as in various social and biological settings~\cite{BARCELO200197}.

Just as common knowledge can be defined as the infinite iteration of individual knowledge (everyone knows, and everyone knows that everyone knows, and so on), one can similarly consider an iterated form of distributed knowledge.
This notion, called \emph{common distributed knowledge}, was first axiomatized by Baltag and Smets~\cite{BaltagS20,Baltag21CDK}.
This operator is parametrized by a set of sets of agents.
Intuitively, it requires that every group of agents in that set has distributed knowledge of $\phi$; and that every group has distributed knowledge that every group has distributed knowledge of $\phi$, and so on.
Despite its apparently complex definition, this operator has a very clear geometric interpretation in simplicial models. Take, for example, the set of all groups of two agents, common distributed knowledge of some formula $\phi$ means that $\phi$ is true everywhere in the $2$-connected component of the actual world.

\paragraph*{Contributions}

The main goal of this paper is to present an introduction to simplicial models for an economics audience, which is both self-contained and accessible, avoiding technical proofs and details. We expose the beautiful interactions  connecting distributed
computing, epistemic logic and topology. 
Nevertheless, we do provide new results, although we present them here only in concrete cases, focusing on three agents. They can be generalized to arbitrary number of agents, for a more technical publication. 
The novel contribution is the study of the majority consensus task, in the broadcast and test-and-set models.
Especially, the epistemic treatment, showing the role of distributed knowledge, both to design solutions and to prove when no solution exists.

Impossibility results for solving tasks that are easier than consensus have been explored in~\cite{AttiyaFPR25}, when the obstruction has to do with a
\emph{local articulation point} in the output complex: a vertex whose neighborhood, its link in the topological parlance, is disconnected.
We present a more general, epistemic version of this type of property, through distributed knowledge.


\paragraph*{Organization of the paper}

In Section~\ref{sec:mainExamples} we  present background on simplicial complexes and the duality with Kripke frames. Also,   we recall how to represent dynamics with communication patterns. 
In Section~\ref{sec:dc} we 
 present the notion of a distributed computing models, which defines the way the agents communicate with each other, and examples. Then we present the basic notion of computational problem to be solved in a given model of computation.
In Section~\ref{sec:semantics-DK} the notions of knowledge and distributed knowledge on simplicial complexes are developed.
In Section~\ref{sec:application} all comes together, to study the solvability of majority consensus using distributed knowledge, in  three models of computation.
Section~\ref{sec:concl} concludes the paper.

\section{Background}
\label{sec:mainExamples}

We present background on simplicial complexes, Kripke frames, and their duality relation in the first section. In the second section we present background on communication graphs, the tool we use to evolve a simplicial model after communication.

\subsection{Simplicial complexes and Kripke frames}

The traditional possible worlds semantics of multi-agent modal logics relies on the notion of Kripke frame.
Even though we will be using simplicial complexes in the rest of the paper, let us first recall the formal definition of a Kripke frame, as a point of comparison.
In the following, we fix a finite set of agents, denoted $\Ag$.

\begin{definition}
\label{def:kripkeframe}
\label{def:kripkeframemor}
An epistemic frame $M = \la W, R \ra$ is given by a set of possible worlds~$W$, together with a family of equivalence relations on~$W$, $\sim_a\; \subseteq W \times W$, for each agent $a \in \Ag$.
The relation $\sim_a$ is called the indistinguishability relation for agent~$a$.
\end{definition}

Epistemic frames, augmented with a valuation that specifies which atomic formulas are true in each possible world, serve as the basic semantic structures for multi-agent epistemic logic.
Indeed, the key idea is that an agent $a \in \Ag$ \emph{knows} a formula $\phi$ in some world $w$, precisely when the formula $\phi$ is true in all the worlds $w'$ that are indistinguishable from $w$ from the point of view of $a$ (that is, such that $w' \sim_a w$).
With this interpretation, epistemic frames model the multi-agent epistemic logic $\Sfive$.

Instead of working with epistemic frames, we will rely on \emph{chromatic simplicial complexes}.
Such structures have been used extensively in the field of fault-tolerant distributed computing~\cite{herlihyetal:2013}.
They are in some sense dual to epistemic frames, where we shift from a global perspective to a local one~\cite{gandalf-journal}.

\begin{definition}
\label{def:simplicial-complex}
A chromatic simplicial complex is given by a tuple $\C = \langle V,S,\chi \rangle$, where $V$ is a set of vertices, $S \subseteq \Pow{V}$ is a family of non-empty subsets of $V$ called simplexes, and $\chi : V \to \Ag$ is called the coloring map, such that:
\begin{enumerate}
\item \label{sc:1} for all $v \in V$, $\{v\} \in S$, and
\item \label{sc:2} for all $X \in S$, $Y \subseteq X$ implies $Y \in S$, and
\item \label{sc:3} for all $X \in S$, for all $x,y \in X$, if $x \neq y$ then $\chi(x) \neq \chi(y)$. 
\end{enumerate}
\end{definition}

Let us introduce some vocabulary and explain the three conditions above.
The \emph{dimension} of a simplex $X \in S$ is $\dim(X) = \card{X}-1$.
A simplex of dimension~$0$ is called a \emph{vertex}, of dimension~$1$ is an \emph{edge}, and of dimension~$2$ is a \emph{triangle}.
When two simplexes $X, Y \in S$ are such that $Y \subseteq X$, we say that $Y$ is a \emph{face} of $X$.
\begin{itemize}
\item Condition (\ref{sc:1}) says that a vertex $v \in V$ can be identified with the $0$-dimensional simplex $\{v\} \in S$.
\item Condition (\ref{sc:2}) says that the set of simplexes $S$ is downward-closed with respect to set inclusion.
In other words, if we have a simplex $X \in S$, we must also have all the faces of~$X$.
For instance, if a simplicial complex contains a triangle, it must also contain all three edges of the triangle, as well as the three vertices.
\item Condition (\ref{sc:3}) says that the simplicial complex must be \emph{well-colored}. That is, every simplex must have vertices of distinct colors (agents).
\end{itemize}
Simplexes that are maximal with respect to set inclusion are called \emph{worlds} (or \emph{facets} in topology).
A simplicial complex is \emph{pure} when all worlds are of the same dimension.
In the following, we always consider pure simplicial complexes. However, epistemic logics on impure simplicial complexes have also been studied~\cite{Ditmarsch22complete, Ditmarsch24deadL, faultAgentsBoletin2024}.

\begin{example}\label{ex:basicDuality}
The picture below shows an epistemic frame (left) and the corresponding chromatic simplicial complex (right).
The set of agents is $\Ag = \{a, b, c\}$.
The frame on the left consists of three worlds $W = \{w_1, w_2, w_3\}$, and the indistinguishability relations are indicated by labeled edges: namely, $w_1 \sim_a w_2$, and $w_1 \sim_b w_2$, and $w_2 \sim_c w_3$.

The same epistemic structure is depicted on the right as a chromatic simplicial complex.
This complex consists of $6$ vertices $V = \{a_1, a_2, b_1, b_2, c_1, c_2\}$.
The coloring map $\chi : V \to \Ag$ is indicated by the names of the vertices, for example, $\chi(a_1) = a$.
There are three simplexes of dimension $2$, represented as triangles, corresponding to the three worlds: $w_1 = \{a_1, b_1, c_1\}$, $w_2 = \{a_1, b_1, c_2\}$, and $w_3 = \{a_2, b_2, c_2\}$.
Thus, the set of simplexes $S$ contains those three worlds and all their faces: three triangles, eight edges and six vertices.

Notice that the indistinguishability relations, in the simplicial complex, correspond to intersection of worlds.
For example, when agent~$c$ has the local state $c_2$, she cannot tell whether she lives in the world $w_2$ or in the world $w_3$, since these two worlds share a common $c$-colored vertex.
In other words, the worlds $w_2$ and $w_3$ are indistinguishable from the point of view of~$c$.
Similarly, the worlds $w_1$ and $w_2$ share an $ab$-colored edge: this will later be crucial to interpret distributed knowledge among the group $\{a, b\}$.

\begin{center}
\begin{tikzpicture}[auto,scale=1.2]

\node (p) at (-0.5,0) {$w_1$};
\node (q) at (1,0) {$w_2$};
\node (r) at (2.5,0) {$w_3$};
\path (p) edge[bend left] node[above] {$a$} (q)
      (p) edge[bend right] node[below] {$b$} (q)
      (q) edge node[above] {$c$} (r);
 

\draw[thick, draw=black, fill=blue!60, fill opacity=0.15]
  (5,0) -- (6,-0.577) -- (6,0.577) -- cycle;
\draw[thick, draw=black, fill=blue!60, fill opacity=0.15]
  (6,-0.577) -- (6,0.577) -- (7,0) -- cycle;
\draw[thick, draw=black, fill=blue!60, fill opacity=0.15]
  (7,0) -- (8,-0.577) -- (8,0.577) -- cycle;
\node (p') at (5.65,0) {$w_1$};
\node (q') at (6.35,0) {$w_2$};
\node (r') at (7.65,0) {$w_3$};
\node[yellowvertex] (b1) at (5,0) {$c_1$};
\node[pinkvertex] (g1) at (6,-0.577) {$a_1$};
\node[bluevertex] (w1) at (6,0.577) {$b_1$};
\node[yellowvertex] (b2) at (7,0) {$c_2$};
\node[pinkvertex] (g2) at (8,-0.577) {$a_2$};
\node[bluevertex] (w2) at (8,0.577) {$b_2$};
\end{tikzpicture}
\end{center}
\end{example}

\subsection{Dynamics via communication graphs}

Our goal is to model distributed computing protocols using simplicial models.
We use an abstract formalism based on communication graphs~\cite{BaltagS20,Castaneda22pattern,Kuhn11dynamic}, which is both general and simple, and should therefore be accessible to readers unfamiliar with distributed computing.
We will study three models of computation, the well-known immediate snapshot model, and two more novel, unreliable broadcast,  and test-and-set.
All three of them can be described using the communication graph formalism.

Initially, every agent (or process, as called in distributed computing) starts the computation with a private \emph{input value}, taken from a set of possible values (usually integers).
Then, the agents communicate with each other in a round-based manner, in order to learn new information about their initial inputs.
At each round, the agents send their current local state to some, or all, of the other agents.
Within a round, the set of messages that are successfully delivered is represented by a \emph{communication graph}.
Such a graph indicates how information flows between the agents: an arrow from~$a$ to~$b$ in a communication graph indicates that agent~$a$ successfully sends a message to agent~$b$.

When the agents always broadcast their whole local state (i.e., agent~$a$ sends all the information currently known to~$a$), and always remember everything, we have a   \emph{full-information protocol}. A main application of simplicial complexes is to prove impossibility results of distributed computabilty; for this purpose, it is sufficient to assume a full-information protocol.

After a finite number of rounds of communication, the computation stops and the agents must decide on an output value, based on their final local state.

\begin{definition}[Communication graph]
Let $\Ag$ denote the set of agents.
A communication graph $G \subseteq \Ag \times \Ag$ is a reflexive binary relation on the set of agents.
When $G$ is clear from context, we write $a \to b$ instead of $(a,b) \in G$.
The in-neighbourhood of~$a$ in~$G$ is denoted $\inN{G}{a} = \{ b \in \Ag \mid b \to a \}$. 
\end{definition}

Initially, each agent's local state is its input value.
Say, for example, that we have three agents $a, b, c$ with local states $\ell_a, \ell_b, \ell_c$, respectively.
They send messages to each other according to some communication graph $G$. Note that, since $G$ is reflexive, the agents always manage to ``send their local state to themselves''; in other words, they do not forget their current information.
After communication, each agent has a new local state, collecting all the messages that were received.
For instance, let $G = \{ a \to b,\; a \to c,\; b \to c \}$.
Since agent~$a$ did not receive any message, its new local state is $[\ell_a, \bot, \bot]$, where the symbol $\bot$ indicates that no message was received.
Agent~$b$ received a message from~$a$ only, so its new local state is $[\ell_a, \ell_b, \bot]$.
Lastly, agent~$c$ received all messages, so its new local state is $[\ell_a, \ell_b, \ell_c]$.

Communication graphs express how information is exchanged during a single round of communication.
But how do we decide which graph to use for each round?
This depends on the assumptions we make about the distributed communication model that we have in mind: might some messages be lost, can there be process failures?
In the following, we consider a simple class of models,  called \emph{oblivious adversary}, that has been thoroughly studied e.g.~\cite{WinklerPGSS24}: at each round, one graph is selected from a fixed set of possible communication graphs.

\begin{definition}[Distributed communication model]
\label{def:distCommModel}
Fix a set of agents $\Ag$.
A distributed communication model is a set $M$ of communication graphs on $\Ag$.
\end{definition}

We now have all the tools that we need to model distributed computation.
Fix a distributed communication model $M$.
Initially, we have a simplicial complex describing all the possible configurations assigning input values to agents.
Then, at the first round, a communication graph~$G \in M$ is chosen arbitrarily.
For each input configuration $(\ell^0_a)_{a \in \Ag}$, and each communication graph $G \in P$, we get a set of possible new local states $(\ell^1_a)_{a \in \Ag}$, These new local states form one simplex of the simplicial complex describing the knowledge after one round of communication.

We can then iterate this process for any finite number of rounds, until we reach the final \emph{protocol complex} consisting of all the possible final configurations of the system. Each process must then decide on an output value, based on its local state. Namely, each vertex of the final simplicial complex has an associated output value. These values define a simplicial map to an output complex, for a given distributed task to be solved.
Remarkably, from round to round, the protocol complex preserves topological properties, which vary from model to model, as we shall see in the next examples.
The power of the model to solve tasks, depends on the topological properties it preserves.

\section{Distributed computing}
\label{sec:dc}

We present here three examples of distributed computing models, which define the way the agents communicate with each other. Then we present the basic notion of computational problem to be solved in a given model of computation, distributed task.

\subsection{Examples of distributed communication models}
\label{sec:distrModels}

We now introduce three distributed communication models, in the sense of \cref{def:distCommModel}, and illustrate the simplicial complex model describing the knowledge after one round.
In each case, we use a set of three agents $\Ag = \{a, b, c\}$.
Moreover, we assume that the initial input simplicial complex consists of only two triangles: agents $a$ and $b$ always start the computation with input value~$0$, whereas agent~$c$ can start with input value either~$0$ or~$1$.
This input simplicial complex is depicted below.
\begin{center}
\begin{tikzpicture}[auto,scale=1.2,rotate=30]
\draw[thick, draw=black, fill=blue!60, fill opacity=0.15]
  (5,0) -- (6,-0.577) -- (6,0.577) -- cycle;
\draw[thick, draw=black, fill=blue!60, fill opacity=0.15]
  (6,-0.577) -- (6,0.577) -- (7,0) -- cycle;
\node (p') at (5.65,0) {$w$};
\node (q') at (6.35,0) {$w'$};
\node[yellowvertex] (b1) at (5,0) {$c_0$};
\node[pinkvertex] (g1) at (6,-0.577) {$a_0$};
\node[bluevertex] (w1) at (6,0.577) {$b_0$};
\node[yellowvertex] (b2) at (7,0) {$c_1$};
\end{tikzpicture}
\end{center}

\paragraph*{Unreliable broadcast}

In the \emph{unreliable broadcast} model introduced in~\cite{FPR2026}, at each round, all agents  send their local state to all the other agents.
However, the message sent by an agent is received by either all agents, or by none (due to a transient failure at the round; perhaps of the transmitter of the agent, or of the communication media). The agent does not know if its message was delivered. 
We assume that at each round, at least one agent succeeds in broadcasting its state.
Notice that if an agent does not receive a message from an agent, it knows that no other agent received it.

This model can be described by the following communication graphs (reflexive loops are not depicted).
The graph $G_1$ corresponds to the situation where only~$a$ manages to broadcast its value; in $G_2$, both $a$ and $b$ manage to broadcast their value; and in $G_3$, all three agents manage to broadcast their value.
\begin{mathpar}
	\begin{tikzpicture}[auto, scale=1.2, font={\small},
	-{stealth[length=1mm,width=1mm]}, shorten <=1pt,
	{every loop/.style}={looseness=2, min distance=3mm}]
	\node (G) at (-1,0.5) {$G_1 =$};
	\node (a) at (0,1) {$a$};
	\node (b) at (-0.5,0) {$b$};
	\node (c) at (0.5,0) {$c$};
	\path 
		  (a) edge (b)
		  (a) edge (c);
	\end{tikzpicture}
\and
	\begin{tikzpicture}[auto, scale=1.2, font={\small},
	-{stealth[length=1mm,width=1mm]}, shorten <=1pt,
	{every loop/.style}={looseness=2, min distance=3mm}]
	\node (G) at (-1,0.5) {$G_2 =$};
	\node (a) at (0,1) {$a$};
	\node (b) at (-0.5,0) {$b$};
	\node (c) at (0.5,0) {$c$};
	\path 
		  (a.260) edge (b.60)
		  (b.80) edge (a.240)
		  (b) edge (c)
		  (a) edge (c);
	\end{tikzpicture}
\and
	\begin{tikzpicture}[auto, scale=1.2, font={\small},
	-{stealth[length=1mm,width=1mm]}, shorten <=1pt,
	{every loop/.style}={looseness=2, min distance=3mm}]
	\node (G) at (-1,0.5) {$G_3 =$};
	\node (a) at (0,1) {$a$};
	\node (b) at (-0.5,0) {$b$};
	\node (c) at (0.5,0) {$c$};
	\path 
		  (a.260) edge (b.60)
		  (b.80) edge (a.240)
		  (b.10) edge (c.170)
		  (c.190) edge (b.-10)
		  (a.280) edge (c.120)
		  (c.100) edge (a.300);
	\end{tikzpicture}
\end{mathpar}
The distributed communication model $M_\text{UB}$ actually contains seven graphs: $G_1$, $G_2$, $G_3$, and graphs obtained from those three by permuting the names of the agents.
The picture below shows the simplicial complex consisting of all the possible local states after one round of computation.

\begin{center}
\scalebox{0.9}{
	\begin{tikzpicture}[scale=1.2, auto, font={\small}]
	\begin{pgfonlayer}{nodelayer}
		\node [style=yellowvertex] (0) at (-8, 3.5) {$c_0$};
		\node [style=pinkvertex] (1) at (-6, 3.5) {$a_0$};
		\node [style=bluevertex] (2) at (-7, 5.25) {$b_0$};
		\node [style=yellowvertex] (3) at (-5, 5.25) {$c_1$};
		\node [style=none] (5) at (-7, 4.15) {$w$};
		\node [style=none] (7) at (-6, 4.75) {$w'$};
		\node [style=yellowvertex] (8) at (0, 0) {$c_0$};
		\node [style=pinkvertex] (9) at (4, 0) {$a_0$};
		\node [style=yellowvertex] (10) at (6, 0) {$c_0$};
		\node [style=pinkvertex] (11) at (10, 0) {};
		\node [style=bluevertex] (12) at (2, 3.5) {$b_0$};
		\node [style=yellowvertex] (14) at (3, 5.25) {$c_0$};
		\node [style=bluevertex] (16) at (5, 8.75) {};
		\node [style=pinkvertex] (17) at (7, 5.25) {};
		\node [style=bluevertex] (18) at (8, 3.5) {};
		\node [style=bluevertex] (19) at (5, 1.75) {$b_0$};
		\node [style=yellowvertex] (20) at (6, 3.5) {$c_0$};
		\node [style=pinkvertex] (21) at (4, 3.5) {$a_0$};
		\node [style=yellowvertex] (22) at (15, 8.75) {$c_1$};
		\node [style=pinkvertex] (23) at (13, 5.25) {$a_0$};
		\node [style=yellowvertex] (24) at (12, 3.5) {$c_1$};
		\node [style=pinkvertex] (25) at (10, 0) {$a_0$};
		\node [style=bluevertex] (26) at (11, 8.75) {$b_0$};
		\node [style=yellowvertex] (27) at (9, 8.75) {$c_1$};
		\node [style=bluevertex] (28) at (5, 8.75) {$b_0$};
		\node [style=pinkvertex] (29) at (7, 5.25) {$a_0$};
		\node [style=bluevertex] (30) at (8, 3.5) {$b_0$};
		\node [style=bluevertex] (31) at (11, 5.25) {$b_0$};
		\node [style=yellowvertex] (32) at (9, 5.25) {$c_1$};
		\node [style=pinkvertex] (33) at (10, 7.25) {$a_0$};
		\node [style=none] (35) at (8, 1.25) {$G_1$};
		\node [style=none] (37) at (7, 4.15) {$G_2$};
		\node [style=none] (39) at (5, 2.85) {$G_3$};
		\node [style=none] (41) at (10, 5.9) {$G_3$};
		\node [style=none] (43) at (10, 2.25) {$G_1$};
		\node [style=none] (45) at (8, 4.6) {$G_2$};
		\node [style=none] (46) at (-3.5, 4.5) {};
		\node [style=none] (47) at (-0.5, 4.5) {};
		\node [style=none] (48) at (-2, 5) {};
		\node [style=none] (49) at (-2, 5) {One Round};
	\end{pgfonlayer}
	\begin{pgfonlayer}{edgelayer}
		\draw [style=filled path] (3.center)
			 to (1.center)
			 to (0.center)
			 to (2.center)
			 to cycle;
		\draw [style=filled path] (2) to (1);
		\draw [style=filled path] (8.center)
			 to (12.center)
			 to (9.center)
			 to cycle;
		\draw [style=filled path] (19.center)
			 to (10.center)
			 to (9.center)
			 to cycle;
		\draw [style=filled path] (19.center)
			 to (21.center)
			 to (20.center)
			 to cycle;
		\draw [style=filled path] (12.center)
			 to (21.center)
			 to (14.center)
			 to cycle;
		\draw [style=filled path] (17.center)
			 to (18.center)
			 to (20.center)
			 to cycle;
		\draw [style=filled path] (11.center)
			 to (10.center)
			 to (18.center)
			 to cycle;
		\draw [style=filled path] (17.center)
			 to (16.center)
			 to (14.center)
			 to cycle;
		\draw [style=filled path] (22.center)
			 to (26.center)
			 to (23.center)
			 to cycle;
		\draw [style=filled path] (31.center)
			 to (24.center)
			 to (23.center)
			 to cycle;
		\draw [style=filled path] (31.center)
			 to (33.center)
			 to (32.center)
			 to cycle;
		\draw [style=filled path] (26.center)
			 to (33.center)
			 to (27.center)
			 to cycle;
		\draw [style=filled path] (29.center)
			 to (30.center)
			 to (32.center)
			 to cycle;
		\draw [style=filled path] (25.center)
			 to (24.center)
			 to (30.center)
			 to cycle;
		\draw [style=filled path] (29.center)
			 to (28.center)
			 to (27.center)
			 to cycle;
		\draw [style=arrow] (46.center) to (47.center);
		\draw [style=arrow] (8) to (12);
		\draw [style=arrow] (8) to (9);
		\draw [style=arrow] (9) to (19);
		\draw [style=arrow] (10) to (19);
		\draw [style=arrow] (25) to (10);
		\draw [style=arrow] (25) to (30);
		\draw [style=arrow] (25) to (24);
		\draw [style=arrow] (24) to (31);
		\draw [style=arrow] (23) to (31);
		\draw [style=arrow] (30) to (20);
		\draw [style=arrow] (29) to (20);
		\draw [style=arrow] (28) to (14);
		\draw [style=arrow] (28) to (29);
		\draw [style=arrow] (28) to (27);
		\draw [style=arrow] (27) to (33);
		\draw [style=arrow] (26) to (33);
		\draw [style=arrow] (22) to (23);
		\draw [style=arrow] (22) to (26);
		\draw [style=double arrow] (27) to (26);
		\draw [style=double arrow] (23) to (24);
		\draw [style=double arrow] (29) to (30);
		\draw [style=double arrow] (9) to (10);
		\draw [style=double arrow] (14) to (12);
		\draw [style=double arrow] (21) to (20);
		\draw [style=double arrow] (20) to (19);
		\draw [style=double arrow] (19) to (21);
		\draw [style=double arrow] (32) to (33);
		\draw [style=double arrow] (33) to (31);
		\draw [style=double arrow] (31) to (32);
		\draw [style=arrow] (29) to (32);
		\draw [style=arrow] (30) to (32);
	\end{pgfonlayer}
\end{tikzpicture}
}
\end{center}


Notice that the overall structure of the initial simplicial complex, with two possible worlds~$w$ and $w'$, is preserved.
However, both of those worlds have been subdivided into~$7$ triangles, depending on which of the~$7$ communication graphs $G \in M_\text{UB}$ occurred. And in each of the two subdivisions, three holes have appeared (while  the initial simplicial complex did not have any holes).
To obtain the complex after one more round, the communication model is applied once more, for each round one world.
Although more holes will appear, there are topological invariants preserved, such as the $1$-connectivity of the protocol complex.

The triangles corresponding to the graphs $G_1, G_2, G_3$ are labeled accordingly; note that there are two of each, depending on whether we started in world~$w$ or in world~$w'$.
In the figure, the arrows on the edges of the simplicial complex are not part of the simplicial complex structure.
They are here to indicate which communication graph corresponds to each simplex. For instance, in the two simplices labeled with $G_1$, there are arrows $a \to b$ and $a \to c$, indicating that agent~$a$ successfully broadcast its value to~$b$ and~$c$.
Lastly, note that the information within a vertex is not the full local state of the agent: it only contains the name of the agent, and its original input value.
The local state is left implicit, and can be deduced from the set of incoming arrows.

\paragraph*{Immediate snapshot}

The second model that we introduce is the \emph{immediate snapshot} model.
It has been studied extensively in distributed computing, because it induces a transformation of simplicial complexes called the \emph{chromatic subdivision}, where no holes  ever appear.
It appears in various distributed computing settings, where agents communicate by a simple shared memory (composed of read/write registers only) or message passing~\cite{herlihyetal:2013}.

Viewed abstractly as a distributed communication model, the immediate snapshot model can be described by the set $M_\text{IS}$ of communication graphs.
It contains the four graphs depicted below, as well as graphs obtained from them by permuting the names of the agents, totaling $13$ graphs.
\begin{mathpar}
	\begin{tikzpicture}[auto, scale=1.2, font={\small},
	-{stealth[length=1mm,width=1mm]}, shorten <=1pt,
	{every loop/.style}={looseness=2, min distance=3mm}]
	\node (G) at (-1,0.5) {$G_1 =$};
	\node (a) at (0,1) {$a$};
	\node (b) at (-0.5,0) {$b$};
	\node (c) at (0.5,0) {$c$};
	\path 
		  (a) edge (b)
		  (b) edge (c)
		  (a) edge (c);
	\end{tikzpicture}
\and
	\begin{tikzpicture}[auto, scale=1.2, font={\small},
	-{stealth[length=1mm,width=1mm]}, shorten <=1pt,
	{every loop/.style}={looseness=2, min distance=3mm}]
	\node (G) at (-1,0.5) {$G_2 =$};
	\node (a) at (0,1) {$a$};
	\node (b) at (-0.5,0) {$b$};
	\node (c) at (0.5,0) {$c$};
	\path 
		  (a.260) edge (b.60)
		  (b.80) edge (a.240)
		  (b) edge (c)
		  (a) edge (c);
	\end{tikzpicture}
\and
	\begin{tikzpicture}[auto, scale=1.2, font={\small},
	-{stealth[length=1mm,width=1mm]}, shorten <=1pt,
	{every loop/.style}={looseness=2, min distance=3mm}]
	\node (G) at (-1,0.5) {$G_3 =$};
	\node (a) at (0,1) {$a$};
	\node (b) at (-0.5,0) {$b$};
	\node (c) at (0.5,0) {$c$};
	\path 
		  (a.260) edge (b.60)
		  (b.80) edge (a.240)
		  (c) edge (b)
		  (c) edge (a);
	\end{tikzpicture}
\and
	\begin{tikzpicture}[auto, scale=1.2, font={\small},
	-{stealth[length=1mm,width=1mm]}, shorten <=1pt,
	{every loop/.style}={looseness=2, min distance=3mm}]
	\node (G) at (-1,0.5) {$G_4 =$};
	\node (a) at (0,1) {$a$};
	\node (b) at (-0.5,0) {$b$};
	\node (c) at (0.5,0) {$c$};
	\path 
		  (a.260) edge (b.60)
		  (b.80) edge (a.240)
		  (b.10) edge (c.170)
		  (c.190) edge (b.-10)
		  (a.280) edge (c.120)
		  (c.100) edge (a.300);
	\end{tikzpicture}
\end{mathpar}

Computationally, this behavior can be achieved in a shared memory model, where each agent first writes its value in a dedicated memory cell, and then reads the values of all other agents.
Since the model is asynchronous, an agent that is ``too fast'' may  read before the other agents have written their value, and receive no information.
For example, in graph $G_1$, agent~$a$ ran first, writing, but reading no value from the other agents from the shared memory, then agent~$b$ came second and saw the value of~$a$, then agent~$c$ came third and read the values written by the two other agents.
In graph $G_2$, agents~$a$ and~$b$ ran at the same time and only saw each other, then agent~$c$ came third and saw both other values.

The picture below shows how the input simplicial complex evolves after one round of immediate snapshot communication.

\begin{center}
\scalebox{0.9}{
	\begin{tikzpicture}[scale=1.2, auto, font={\small}]
	\begin{pgfonlayer}{nodelayer}
		\node [style=yellowvertex] (0) at (-8, 3.5) {$c_0$};
		\node [style=pinkvertex] (1) at (-6, 3.5) {$a_0$};
		\node [style=bluevertex] (2) at (-7, 5.25) {$b_0$};
		\node [style=yellowvertex] (3) at (-5, 5.25) {$c_1$};
		\node [style=none] (5) at (-7, 4.15) {$w$};
		\node [style=none] (7) at (-6, 4.75) {$w'$};
		\node [style=yellowvertex] (8) at (0, 0) {$c_0$};
		\node [style=pinkvertex] (11) at (10, 0) {};
		\node [style=yellowvertex] (22) at (15, 8.75) {$c_1$};
		\node [style=pinkvertex] (25) at (10, 0) {$a_0$};
		\node [style=bluevertex] (28) at (5, 8.75) {$b_0$};
		\node [style=none] (46) at (-3.5, 4.5) {};
		\node [style=none] (47) at (-0.5, 4.5) {};
		\node [style=none] (48) at (-2, 5) {};
		\node [style=none] (49) at (-2, 5) {One Round};
		\node [style=yellowvertex] (50) at (6.75, 0) {$c_0$};
		\node [style=pinkvertex] (51) at (3.25, 0) {$a_0$};
		\node [style=yellowvertex] (52) at (8.25, 8.75) {$c_1$};
		\node [style=bluevertex] (53) at (11.75, 8.75) {$b_0$};
		\node [style=yellowvertex] (54) at (3.25, 5.75) {$c_0$};
		\node [style=bluevertex] (55) at (1.75, 3) {$b_0$};
		\node [style=pinkvertex] (56) at (6.75, 5.75) {$a_0$};
		\node [style=bluevertex] (57) at (8.25, 3) {$b_0$};
		\node [style=bluevertex] (58) at (5, 2) {$b_0$};
		\node [style=pinkvertex] (59) at (4, 3.5) {$a_0$};
		\node [style=yellowvertex] (60) at (6, 3.5) {$c_0$};
		\node [style=yellowvertex] (61) at (9, 5.25) {$c_1$};
		\node [style=pinkvertex] (62) at (10, 7) {$a_0$};
		\node [style=bluevertex] (63) at (11, 5.25) {$b_0$};
		\node [style=yellowvertex] (64) at (11.75, 3) {$c_1$};
		\node [style=pinkvertex] (65) at (13.25, 5.75) {$a_0$};
		\node [style=none] (66) at (5, 3) {$G_4$};
		\node [style=none] (67) at (7, 4.15) {$G_2$};
		\node [style=none] (68) at (3.6, 2.25) {$G_3$};
		\node [style=none] (69) at (8, 2.25) {$G_1$};
		\node [style=none] (70) at (8, 4.75) {$G_2$};
		\node [style=none] (71) at (9, 2.75) {$G_1$};
		\node [style=none] (72) at (10, 5.85) {$G_4$};
		\node [style=none] (73) at (11.5, 6.75) {$G_3$};
	\end{pgfonlayer}
	\begin{pgfonlayer}{edgelayer}
		\draw [style=filled path] (3.center)
			 to (1.center)
			 to (0.center)
			 to (2.center)
			 to cycle;
		\draw [style=filled path] (2) to (1);
		\draw [style=arrow] (46.center) to (47.center);
		\draw [style=filled path] (8.center)
			 to (28.center)
			 to (22.center)
			 to (25.center)
			 to cycle;
		\draw [style=arrow] (8) to (55);
		\draw [style=arrow] (8) to (51);
		\draw [style=arrow] (8) to (59);
		\draw [style=arrow] (8) to (58);
		\draw [style=arrow] (25) to (50);
		\draw [style=arrow] (25) to (57);
		\draw [style=arrow] (25) to (58);
		\draw [style=arrow] (25) to (60);
		\draw [style=arrow] (25) to (61);
		\draw [style=arrow] (25) to (63);
		\draw [style=arrow] (25) to (64);
		\draw [style=arrow] (22) to (65);
		\draw [style=arrow] (22) to (63);
		\draw [style=arrow] (22) to (53);
		\draw [style=arrow] (22) to (62);
		\draw [style=arrow] (28) to (52);
		\draw [style=arrow] (28) to (62);
		\draw [style=arrow] (28) to (61);
		\draw [style=arrow] (28) to (56);
		\draw [style=arrow] (28) to (60);
		\draw [style=arrow] (28) to (59);
		\draw [style=arrow] (28) to (54);
		\draw [style=arrow] (54) to (59);
		\draw [style=arrow] (55) to (59);
		\draw [style=arrow] (51) to (58);
		\draw [style=arrow] (50) to (58);
		\draw [style=arrow] (56) to (60);
		\draw [style=arrow] (57) to (60);
		\draw [style=arrow] (57) to (61);
		\draw [style=arrow] (56) to (61);
		\draw [style=arrow] (52) to (62);
		\draw [style=arrow] (53) to (62);
		\draw [style=arrow] (65) to (63);
		\draw [style=arrow] (64) to (63);
		\draw [style=double arrow] (65) to (64);
		\draw [style=double arrow] (52) to (53);
		\draw [style=double arrow] (54) to (55);
		\draw [style=double arrow] (51) to (50);
		\draw [style=double arrow] (59) to (60);
		\draw [style=double arrow] (60) to (58);
		\draw [style=double arrow] (58) to (59);
		\draw [style=double arrow] (56) to (57);
		\draw [style=double arrow] (62) to (61);
		\draw [style=double arrow] (61) to (63);
		\draw [style=double arrow] (63) to (62);
	\end{pgfonlayer}
\end{tikzpicture}
}
\end{center}

As in the previous example, we can recognize a pattern where each world of the initial simplicial complex has been subdivided into $13$ triangles, one for each possible communication graph of the model.
The graph corresponding to each triangle can be recognized using the arrows drawn onto the simplicial complex. Moreover, the ones corresponding to the graphs $G_1, G_2, G_3, G_4$ are labeled accordingly.
Within the vertices, the name of the agent and its original input value is indicated.

Notice that the immediate snapshot model  introduces no  holes in the simplicial complex: the topology of the input complex is preserved. And in fact this is the case after any number of rounds.
This is a crucial observation in the distributed computing literature, allowing to prove many impossibility results~\cite{herlihyetal:2013}.

\paragraph*{Test-and-set}

The \emph{test-and-set} object is a synchronization primitive used in concurrent programming to prevent race conditions.
It consists of a memory cell, initialized to~$0$.
Processes can atomically read the current value of the memory cell, and set it to~$1$.
Thus, when several processes try to access it in parallel, a unique process (the ``winner'') will be able to read value~$0$, while all other processes (the ``losers'') will read value~$1$.

Using this primitive, we can use the following variation of the immediate snapshot model. 
At each round, all agents write their value in the shared memory, then they perform a test-and-set. The losers take a snapshot of the shared memory. The winner does not read the memory, but it has the guarantee that all other agents will be able to see its value.

The three communication graphs below correspond to all executions where agent~$a$ is the winner.
Either the other two agents see each other's value (graph $G_3$), or only one sees the other (graphs $G_1$ and $G_2$).
In both cases, they both see the input value of the winner.
The distributed communication model $M_\text{T\&S}$ for the test-and-set model consists of $9$ graphs: the three graphs below, as well as those where agent $b$ or $c$ is the winner.
\begin{mathpar}
	\begin{tikzpicture}[auto, scale=1.2, font={\small},
	-{stealth[length=1mm,width=1mm]}, shorten <=1pt,
	{every loop/.style}={looseness=2, min distance=3mm}]
	\node (G) at (-1,0.5) {$G_1 =$};
	\node (a) at (0,1) {$a$};
	\node (b) at (-0.5,0) {$b$};
	\node (c) at (0.5,0) {$c$};
	\path 
		  (a) edge (b)
		  (b) edge (c)
		  (a) edge (c);
	\end{tikzpicture}
\and
	\begin{tikzpicture}[auto, scale=1.2, font={\small},
	-{stealth[length=1mm,width=1mm]}, shorten <=1pt,
	{every loop/.style}={looseness=2, min distance=3mm}]
	\node (G) at (-1,0.5) {$G_2 =$};
	\node (a) at (0,1) {$a$};
	\node (b) at (-0.5,0) {$b$};
	\node (c) at (0.5,0) {$c$};
	\path 
		  (a) edge (b)
		  (c) edge (b)
		  (a) edge (c);
	\end{tikzpicture}
\and
	\begin{tikzpicture}[auto, scale=1.2, font={\small},
	-{stealth[length=1mm,width=1mm]}, shorten <=1pt,
	{every loop/.style}={looseness=2, min distance=3mm}]
	\node (G) at (-1,0.5) {$G_3 =$};
	\node (a) at (0,1) {$a$};
	\node (b) at (-0.5,0) {$b$};
	\node (c) at (0.5,0) {$c$};
	\path 
		  (a) edge (b)
		  (b.10) edge (c.170)
		  (c.190) edge (b.-10)
		  (a) edge (c);
	\end{tikzpicture}
\end{mathpar}

The simplicial complex describing the knowledge of the agents after one round of the test-and-set protocol is depicted below.
Notice that it is a sub-complex of the immediate snapshot simplicial complex, where some triangles are removed.
Indeed, all communication graphs of $M_\text{T\&S}$ are also in $M_{IS}$. The holes correspond to the four missing graphs.

\begin{center}
\scalebox{0.9}{
	\begin{tikzpicture}[scale=1.2, auto, font={\small}]
	\begin{pgfonlayer}{nodelayer}
		\node [style=yellowvertex] (0) at (-8, 3.5) {$c_0$};
		\node [style=pinkvertex] (1) at (-6, 3.5) {$a_0$};
		\node [style=bluevertex] (2) at (-7, 5.25) {$b_0$};
		\node [style=yellowvertex] (3) at (-5, 5.25) {$c_1$};
		\node [style=none] (5) at (-7, 4.15) {$w$};
		\node [style=none] (7) at (-6, 4.75) {$w'$};
		\node [style=yellowvertex] (8) at (0, 0) {$c_0$};
		\node [style=pinkvertex] (11) at (10, 0) {};
		\node [style=yellowvertex] (22) at (15, 8.75) {$c_1$};
		\node [style=pinkvertex] (25) at (10, 0) {$a_0$};
		\node [style=bluevertex] (28) at (5, 8.75) {$b_0$};
		\node [style=none] (46) at (-3.5, 4.5) {};
		\node [style=none] (47) at (-0.5, 4.5) {};
		\node [style=none] (48) at (-2, 5) {};
		\node [style=none] (49) at (-2, 5) {One Round};
		\node [style=yellowvertex] (50) at (6.25, 0) {$c_0$};
		\node [style=pinkvertex] (51) at (3.75, 0) {$a_0$};
		\node [style=yellowvertex] (52) at (8.75, 8.75) {$c_1$};
		\node [style=bluevertex] (53) at (11.25, 8.75) {$b_0$};
		\node [style=yellowvertex] (54) at (3, 5.25) {$c_0$};
		\node [style=bluevertex] (55) at (2, 3.5) {$b_0$};
		\node [style=pinkvertex] (56) at (7, 5.25) {$a_0$};
		\node [style=bluevertex] (57) at (8, 3.5) {$b_0$};
		\node [style=bluevertex] (58) at (5, 2) {$b_0$};
		\node [style=pinkvertex] (59) at (4, 3.5) {$a_0$};
		\node [style=yellowvertex] (60) at (6, 3.5) {$c_0$};
		\node [style=yellowvertex] (61) at (9, 5.25) {$c_1$};
		\node [style=pinkvertex] (62) at (10, 7) {$a_0$};
		\node [style=bluevertex] (63) at (11, 5.25) {$b_0$};
		\node [style=yellowvertex] (64) at (12, 3.5) {$c_1$};
		\node [style=pinkvertex] (65) at (13, 5.25) {$a_0$};
		\node [style=none] (68) at (6.35, 2.25) {$G_3$};
		\node [style=none] (69) at (8, 2.5) {$G_1$};
		\node [style=none] (71) at (8.9, 3.1) {$G_1$};
		\node [style=none] (73) at (10, 4) {$G_3$};
		\node [style=none] (74) at (6.75, 0.75) {$G_2$};
		\node [style=none] (75) at (11.15, 3.15) {$G_2$};
	\end{pgfonlayer}
	\begin{pgfonlayer}{edgelayer}
		\draw [style=filled path] (3.center)
			 to (1.center)
			 to (0.center)
			 to (2.center)
			 to cycle;
		\draw [style=filled path] (2) to (1);
		\draw [style=arrow] (46.center) to (47.center);
		\draw [style=arrow] (8) to (55);
		\draw [style=arrow] (8) to (51);
		\draw [style=arrow] (8) to (59);
		\draw [style=arrow] (8) to (58);
		\draw [style=arrow] (25) to (50);
		\draw [style=arrow] (25) to (57);
		\draw [style=arrow] (25) to (58);
		\draw [style=arrow] (25) to (60);
		\draw [style=arrow] (25) to (61);
		\draw [style=arrow] (25) to (63);
		\draw [style=arrow] (25) to (64);
		\draw [style=arrow] (22) to (65);
		\draw [style=arrow] (22) to (63);
		\draw [style=arrow] (22) to (53);
		\draw [style=arrow] (22) to (62);
		\draw [style=arrow] (28) to (52);
		\draw [style=arrow] (28) to (62);
		\draw [style=arrow] (28) to (61);
		\draw [style=arrow] (28) to (56);
		\draw [style=arrow] (28) to (60);
		\draw [style=arrow] (28) to (59);
		\draw [style=arrow] (54) to (59);
		\draw [style=arrow] (55) to (59);
		\draw [style=arrow] (51) to (58);
		\draw [style=arrow] (50) to (58);
		\draw [style=arrow] (56) to (60);
		\draw [style=arrow] (57) to (60);
		\draw [style=arrow] (57) to (61);
		\draw [style=arrow] (56) to (61);
		\draw [style=arrow] (52) to (62);
		\draw [style=arrow] (53) to (62);
		\draw [style=arrow] (65) to (63);
		\draw [style=arrow] (64) to (63);
		\draw [style=double arrow] (59) to (60);
		\draw [style=double arrow] (60) to (58);
		\draw [style=double arrow] (58) to (59);
		\draw [style=double arrow] (62) to (61);
		\draw [style=double arrow] (61) to (63);
		\draw [style=double arrow] (63) to (62);
		\draw [style=arrow] (28) to (54);
		\draw [style=filled path] (58.center)
			 to (51.center)
			 to (8.center)
			 to (55.center)
			 to (59.center)
			 to cycle;
		\draw [style=filled path] (25.center)
			 to (64.center)
			 to (63.center)
			 to (61.center)
			 to (57.center)
			 to (60.center)
			 to (58.center)
			 to (50.center)
			 to cycle;
		\draw [style=filled path] (60.center)
			 to (56.center)
			 to (61.center)
			 to (62.center)
			 to (52.center)
			 to (28.center)
			 to (54.center)
			 to (59.center)
			 to cycle;
		\draw [style=filled path] (53.center)
			 to (62.center)
			 to (63.center)
			 to (65.center)
			 to (22.center)
			 to cycle;
	\end{pgfonlayer}
\end{tikzpicture}
}
\end{center}

The reader interested in  simplicial complexes for fours agents, namely, 3-dimensional, can see the figures of~\cite{HerlihyR95} for a version of the test-and-set model.

\subsection{Examples of distributed tasks}
\label{sec:tasks}

The goal of a distributed communication protocol is to solve a \emph{distributed task}.
Each agent starts with an input value, and after several rounds of communication, ends up with a local state encoding some partial information about the system.
Based solely on this local state, the agent must decide on an \emph{output value}.
A task relates input configuration with the corresponding valid output configurations. Solving such a task requires the agents to coordinate their individual decisions so that, together, their outputs satisfy the overall global specification.

Rather than giving the abstract general definition of a task (see~\cite{herlihyetal:2013} for a reference), we focus on two examples: consensus, and majority consensus.

\paragraph*{Consensus}

Fix a set of three agents, $\Ag = \{a, b, c\}$.
Each agent starts with input values within the set $\cB = \{0, 1\}$, so the initial simplicial complex has~$8$ possible input configurations (that is, $8$ triangles), corresponding to all assignments of $0$'s and $1$'s to the agents : $000$, $001$, \ldots, $111$.
The goal of the consensus task is that, at the end of the computation, all agents should agree on the same output value.
Moreover, this common output value should be among the initial input values.
This is traditionally  expressed by the following two properties:
\smallskip
\begin{itemize}
\item \emph{Agreement:} all three output values must be the same.
\item \emph{Validity:} the agreed value must be one of the three inputs.
\end{itemize}
\smallskip
For instance, if the three agents start with inputs $(0,1,0)$, then the three outputs can be either $(1,1,1)$ or $(0,0,0)$.
On the other hand, if the three inputs are $(1,1,1)$, then they must all decide on value~$1$, because agreeing on value~$0$ would contradict the validity condition.

Task specifications also have a geometric interpretation in terms of simplicial complexes.
Indeed, at the end of the computation, agents decide on their output value based on their local state.
But the local states of the agents are precisely the vertices of the simplicial complex representing the knowledge after communication occurs.
So, to solve a task, one must assign a decision value to each vertex of a simplicial complex.
This assignment can also be viewed as a \emph{simplicial map} into another simplicial complex called the \emph{output complex}.

For the consensus task, the output complex contains only two triangles, corresponding to the two possible output configurations: $(0,0,0)$, or $(1,1,1)$.
It is depicted in the figure below, on the right.
The simplicial complex on the left is the input complex, consisting of all the possible input configurations of the task.
Note that, topologically, this is a triangulated sphere.
The task specification $\Delta$ relates simplexes of the input complex with simplexes of the output complex, according to the agreement and validity conditions above.

\begin{center}
    \includegraphics[scale=0.5]{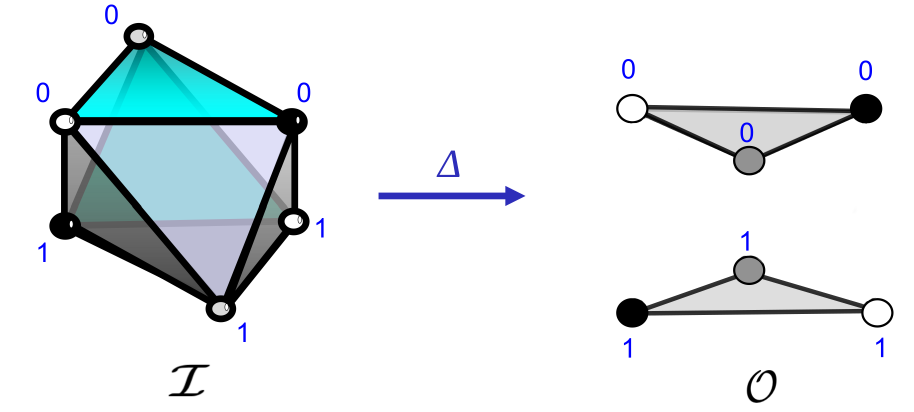}
\end{center}

\paragraph*{Majority consensus}

The consensus task is known to be unsolvable in many distributed computing models~\cite{LynchBook96,facesCons}.
To get around this,  weaker variants of consensus have been defined, such as the \emph{$0$-majority consensus} task~\cite{AttiyaFPR25}.
As in the regular consensus task, agents start with binary input values $0$ or $1$, and decide on binary output values.
If they all start with the same input value, they must all choose this value as output.
However, the agreement condition is relaxed: when the input values are not the same, the agents can all decide $1$, or all decide $0$, or they can disagree as long as there is a majority of $0$'s.
\smallskip
\begin{itemize}
\item \emph{Majority agreement:} either all output values are the same, or there is a majority of $0$'s.
\item \emph{Validity:} all output values must be among the set of inputs.
\end{itemize}

The output complex of the majority consensus task is depicted below. As with consensus, it consists of a subcomplex of the input complex, but
compared to the consensus task, there are three extra triangles, where the output values are $(0,0,1)$, $(0,1,0)$, or $(1,0,0)$.
The task specification $\Delta$ maps the $(0,0,0)$ triangle of the input complex to the $(0,0,0)$ triangle of the output complex; and similarly for the $(1,1,1)$ triangle.
All other triangles of the input complex can be mapped anywhere in the output.

\begin{center}
    \includegraphics[scale=0.5]{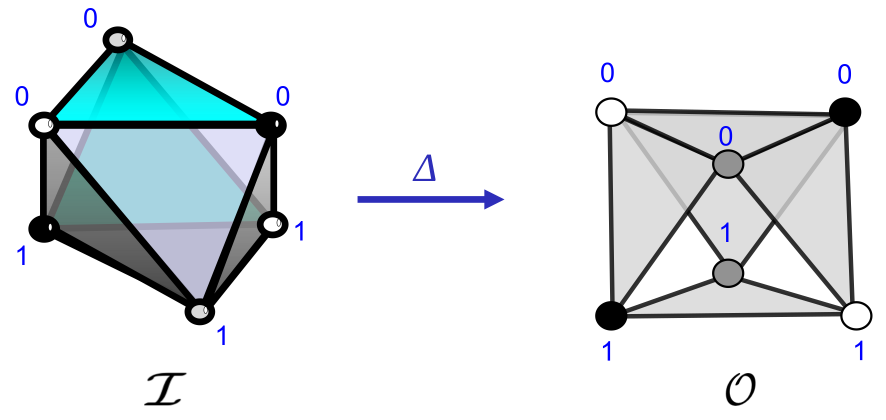}
\end{center}


\section{Interpreting epistemic logics on simplicial complexes}
\label{sec:semantics-DK}

We have seen that simplicial complexes can model a set of possible worlds, and the local points of view of the agents about these worlds.
In order to give a semantics of epistemic logic on such structures, we need an additional piece of information: \emph{atomic formulas}, that is, basic facts about the world that we want to reason about.
In the distributed computing examples, the atomic formulas will be statements of the form ``agent~$a$ has input value~$0$''.
In this section, to illustrate the semantics of the various logical operators, we assume the existence of a single, abstract atomic formula called~$p$.
In the examples, we label the worlds with either $p$ ($p$ is true) or $\neg p$ ($p$ is false).

An epistemic logic formula~$\phi$ is built by combining atomic formulas using various \emph{logical operators}.
First, we have the basic operators of propositional logic:  $\phi$ and $\psi$ (written $\phi \land \psi$); $\phi$ or $\psi$ (written $\phi \lor \psi$); $\phi$ implies $\psi$ (written $\phi \Rightarrow \psi$); not $\phi$ (written $\neg \phi$).
More interesting for our purpose are the epistemic operators:

\begin{itemize}
\item \textbf{Knowledge:} $K_a\,\phi$, where $a \in \Ag$ is an agent. This is read ``$a$ knows $\phi$''.
\item \textbf{Common knowledge:} $C_A\,\phi$, where $A \in \Pow{\Ag}$ is a group of agents. This is read ``there is common knowledge among $A$ that $\phi$''.
\item \textbf{Distributed knowledge:} $D_A\,\phi$, where $A \in \Pow{\Ag}$ is a group of agents. This is read ``there is distributed knowledge among $A$ that $\phi$''.
\item \textbf{Common distributed knowledge:} $CD_\alpha\,\phi$, where $\alpha \in \Pow{\Pow{\Ag}}$ is a set of groups of agents. This is read ``there is common distributed knowledge among $\alpha$ that $\phi$''.
\end{itemize}

In the rest of the section, we explain the intuition behind each of those four operators, and illustrate their simplicial complex semantics using toy examples.

\subsection{Knowledge}

In the classic possible world semantics, the formula $K_a\,\phi$ is true in some world~$w$ when $\phi$ is true in all worlds that are indistinguishable from~$w$ from the point of view of agent~$a$.

Let us translate this in a simplicial complex model.
The actual world~$w$ is a simplex (in our examples, a triangle).
This simplex contains an $a$-colored vertex representing the local state of agent~$a$, i.e., the partial information about the world~$w$ that is available to~$a$.
Any other simplex that contains this vertex is indistinguishable from~$w$, from the point of view of~$a$.
Therefore, we can say that $a$ knows $\phi$ in world $w$, exactly when $\phi$ is true in all simplexes that share an $a$-colored vertex with~$w$.

The simplicial complex below depicts an epistemic situation with $10$ possible worlds. The actual world~$w$ is highlighted in orange.
The atomic proposition $p$ is true in $9$ of the possible worlds (including~$w$), and false in the remaining world at the bottom-left.
\begin{center}
	\begin{tikzpicture}[scale=1.4, auto, font={\small}]
	\begin{pgfonlayer}{nodelayer}
		\node [style=none] (5) at (1, 0.65) {$p$};
		\node [style=yellowvertex] (6) at (1, 1.75) {$c$};
		\node [style=yellowvertex] (7) at (1, -1.75) {$c$};
		\node [style=pinkvertex] (8) at (0, 0) {$a$};
		\node [style=bluevertex] (9) at (2, 0) {$b$};
		\node [style=none] (14) at (1, -0.65) {$p$};
		\node [style=pinkvertex] (17) at (3, 1.75) {$a$};
		\node [style=pinkvertex] (18) at (3, -1.75) {$a$};
		\node [style=yellowvertex] (19) at (4, 0) {$c$};
		\node [style=none] (20) at (3, 0.65) {$p$};
		\node [style=none] (21) at (3, -0.65) {$p$};
		\node [style=none] (22) at (2, -1.1) {$p$};
		\node [style=none] (23) at (2, 1.1) {$p$};
		\node [style=bluevertex] (24) at (-1, 1.75) {$b$};
		\node [style=bluevertex] (25) at (-1, -1.75) {$b$};
		\node [style=yellowvertex] (26) at (-2, 0) {$c$};
		\node [style=none] (27) at (-1, 0.65) {$p$};
		\node [style=none] (28) at (0, -1.1) {$p$};
		\node [style=none] (29) at (0, 1.1) {$p$};
		\node [style=none] (30) at (-1, -0.65) {$\neg p$};
	\end{pgfonlayer}
	\begin{pgfonlayer}{edgelayer}
		\draw [style=filled path] (8) to (7);
		\draw [style=filled path] (24.center)
			 to (26.center)
			 to (25.center)
			 to (7.center)
			 to (18.center)
			 to (19.center)
			 to (17.center)
			 to (6.center)
			 to cycle;
		\draw [style=filled path] (6) to (8);
		\draw [style=filled path] (6) to (9);
		\draw [style=filled path] (9) to (7);
		\draw [style=filled path] (9) to (8);
		\draw [style=filled path] (17) to (9);
		\draw [style=filled path] (9) to (18);
		\draw [style=filled path] (9) to (19);
		\draw [style=filled path] (24) to (8);
		\draw [style=filled path] (8) to (26);
		\draw [style=filled path] (25) to (8);
		\draw [style=fill highlight] (8.center)
			 to (6.center)
			 to (9.center)
			 to cycle;
	\end{pgfonlayer}
\end{tikzpicture}
\end{center}
In this model, in the highlighted world~$w$, it is the case that $b$ knows $p$.
Indeed, the $b$-colored vertex in world $w$ belongs to six possible worlds; but in all of them, $p$ is true.
Thus, the formula $K_b\,\phi$ holds.
Similarly, we can say that $c$ knows $p$ in the world $w$. This time, there are only three possible worlds to consider.
However, agent $a$ does not know $p$. From the point of view of~$a$, the actual world might possibly be the bottom-left one where $p$ is false.
Therefore, the formula $K_a\,p$ is false in world~$w$.

\subsection{Common knowledge}

Common knowledge is the iterated variant of knowledge.
It is parameterized by a set of agents $A \subseteq \Ag$.
Intuitively, the formula $C_A\,\phi$ says that, among the group $A$, everyone knows~$\phi$, and everyone knows that everyone knows~$\phi$, and so on ad infinitum.
Note that common knowledge subsumes individual knowledge, as $K_a\,\phi \Leftrightarrow C_{\{a\}}\,\phi$.

In a simplicial complex model, this can be formalized as follows.
Fix a simplex $w$ representing the actual world.
Then there is common knowledge of~$\phi$ among~$A$ in world~$w$, when $\phi$ is true in all worlds reachable from~$w$ by
a sequence of steps where at each step some agent in $A$ cannot distinguish the two worlds.
A bit more formally, $C_A\,\phi$ holds in world~$w$ when for every finite path $w_0, w_1, w_2, \ldots, w_k$ starting from the world $w_0 = w$, such that at each step, the worlds $w_i$ and $w_{i+1}$ share a vertex whose color belongs to the set $A$, then the formula $\phi$ is true in~$w_k$.

An example is depicted below, with a simplicial complex consisting of six worlds, the actual world~$w$ being highlighted in orange.
\begin{center}
	\begin{tikzpicture}[scale=1.4, auto, font={\small}]
	\begin{pgfonlayer}{nodelayer}
		\node [style=none] (5) at (1, 0.65) {$p$};
		\node [style=yellowvertex] (6) at (1, 1.75) {$c$};
		\node [style=pinkvertex] (8) at (0, 0) {$a$};
		\node [style=bluevertex] (9) at (2, 0) {$b$};
		\node [style=none] (10) at (3, 0.65) {$p$};
		\node [style=yellowvertex] (11) at (4, 0) {$c$};
		\node [style=pinkvertex] (12) at (3, 1.75) {$a$};
		\node [style=bluevertex] (13) at (2, 0) {$b$};
		\node [style=none] (14) at (5, 0.65) {$p$};
		\node [style=yellowvertex] (15) at (4, 0) {$c$};
		\node [style=pinkvertex] (16) at (6, 0) {$a$};
		\node [style=bluevertex] (17) at (5, 1.75) {$b$};
		\node [style=none] (18) at (7, 0.65) {$p$};
		\node [style=yellowvertex] (19) at (7, 1.75) {$c$};
		\node [style=pinkvertex] (20) at (6, 0) {$a$};
		\node [style=bluevertex] (21) at (8, 0) {$b$};
		\node [style=bluevertex] (22) at (8, 0) {$b$};
		\node [style=bluevertex] (23) at (8, 0) {$b$};
		\node [style=none] (24) at (9, 0.65) {$p$};
		\node [style=yellowvertex] (25) at (10, 0) {$c$};
		\node [style=pinkvertex] (26) at (9, 1.75) {$a$};
		\node [style=bluevertex] (27) at (8, 0) {$b$};
		\node [style=none] (28) at (11, 0.65) {$\neg p$};
		\node [style=yellowvertex] (29) at (10, 0) {$c$};
		\node [style=pinkvertex] (30) at (12, 0) {$a$};
		\node [style=bluevertex] (31) at (11, 1.75) {$b$};
		\node [style=pinkvertex] (32) at (12, 0) {$a$};
	\end{pgfonlayer}
	\begin{pgfonlayer}{edgelayer}
		\draw [style=filled path] (8) to (6);
		\draw [style=filled path] (6) to (9);
		\draw [style=filled path] (9) to (8);
		\draw [style=fill highlight] (6.center)
			 to (9.center)
			 to (8.center)
			 to cycle;
		\draw [style=filled path] (12) to (11);
		\draw [style=filled path] (11) to (13);
		\draw [style=filled path] (13) to (12);
		\draw [style=filled path] (11.center)
			 to (13.center)
			 to (12.center)
			 to cycle;
		\draw [style=filled path] (16) to (15);
		\draw [style=filled path] (15) to (17);
		\draw [style=filled path] (17) to (16);
		\draw [style=filled path] (15.center)
			 to (17.center)
			 to (16.center)
			 to cycle;
		\draw [style=filled path] (20) to (19);
		\draw [style=filled path] (19) to (21);
		\draw [style=filled path] (21) to (20);
		\draw [style=filled path] (19.center)
			 to (21.center)
			 to (20.center)
			 to cycle;
		\draw [style=filled path] (26) to (25);
		\draw [style=filled path] (25) to (27);
		\draw [style=filled path] (27) to (26);
		\draw [style=filled path] (25.center)
			 to (27.center)
			 to (26.center)
			 to cycle;
		\draw [style=filled path] (30) to (29);
		\draw [style=filled path] (29) to (31);
		\draw [style=filled path] (31) to (30);
		\draw [style=filled path] (29.center)
			 to (31.center)
			 to (30.center)
			 to cycle;
	\end{pgfonlayer}
\end{tikzpicture}
\end{center}
In the highlighted world~$w$, there is no common knowledge of~$p$ among the agents $\{a, b, c\}$.
Indeed, there is a finite path connecting the world $w$ to the rightmost world where $p$ is false.
The meaning of this path is that, in world~$w$, agent~$b$ considers possible, that agent~$c$ considers possible, that agent~$a$ considers possible, that agent~$b$ considers possible, that agent~$c$ considers possible, that~$p$ is false.
Thus, like in Kripke epistemic models, common knowledge is strongly related to the connectedness of the model.

\subsection{Distributed knowledge}

The distributed knowledge operator $D_A\,\phi$ models, intuitively, what a group~$A$ of agents would know if they were able to combine their individual knowledge (for example, via perfectly reliable communication)~\cite{fagin,halpernmoses:1990}.
Another way to explain it is that we view the group~$A$ of agents as a single entity, which is able to distinguish two possible worlds whenever at least one agent $a \in A$ can distinguish them.
Note that distributed knowledge subsumes individual knowledge, as $K_a\,\phi \Leftrightarrow D_{\{a\}}\,\phi$. However, common knowledge and distributed knowledge are not interdefinable.

In terms of simplicial complex models, there is a clear geometric interpretation of distributed knowledge: it corresponds to higher-dimensional connectivity between adjacent simplexes.
Let~$w$ be the actual world, and $A \subseteq \Ag$ a group of agents.
Then the formula $D_A\,\phi$ holds in world $w$, when $\phi$ is true in all the worlds~$w'$ that share an $A$-colored face with~$w$.
For example, if~$A = \{a, b\}$ is a group of two agents, then distributed knowledge among~$A$ corresponds to moving from one world to another via a shared $ab$-colored edge.

The example below shows how we can have distributed knowledge of a fact, even though each individual agent does not know that fact.
It is a simplicial model with four worlds, the actual world~$w$ being highlighted in orange.
\begin{center}
	\begin{tikzpicture}[scale=1.4, auto, font={\small}]
	\begin{pgfonlayer}{nodelayer}
		\node [style=none] (5) at (1, 0.65) {$p$};
		\node [style=yellowvertex] (6) at (1, 1.75) {$c$};
		\node [style=yellowvertex] (7) at (1, -1.75) {$c$};
		\node [style=pinkvertex] (8) at (0, 0) {$a$};
		\node [style=bluevertex] (9) at (2, 0) {$b$};
		\node [style=yellowvertex] (10) at (3.75, 1) {$c$};
		\node [style=yellowvertex] (11) at (-1.75, -1) {$c$};
		\node [style=bluevertex] (12) at (-1.75, 1) {$b$};
		\node [style=pinkvertex] (13) at (3.75, -1) {$a$};
		\node [style=none] (14) at (1, -0.65) {$p$};
		\node [style=none] (15) at (3.15, 0) {$\neg p$};
		\node [style=none] (16) at (-1.15, 0) {$\neg p$};
	\end{pgfonlayer}
	\begin{pgfonlayer}{edgelayer}
		\draw [style=filled path] (11.center)
			 to (12.center)
			 to (8.center)
			 to (6.center)
			 to (9.center)
			 to (13.center)
			 to (10.center)
			 to (9.center)
			 to (7.center)
			 to (8.center)
			 to cycle;
		\draw [style=filled path] (9) to (8);
		\draw [style=fill highlight] (6.center)
			 to (9.center)
			 to (8.center)
			 to cycle;
	\end{pgfonlayer}
\end{tikzpicture}
\end{center}
First, note that in the highlighted world~$w$, agent~$a$ does not know~$p$. Indeed, $a$ considers possible that the actual world might be the leftmost one, where $p$ is false.
Similarly, agent~$b$ does not know~$p$, because $b$ considers possible that the actual world might be the rightmost one.
However, the formula $D_{\{a,b\}}\,p$ holds in world~$w$: there is distributed knowledge among $\{a, b\}$ that $p$ is true.
This is because there are only two possible worlds that contain the $ab$-colored edge of~$w$, and in both of them, $p$ is true.

\subsection{Common distributed knowledge}
\label{sec:CDK}

Common distributed knowledge is the infinite iteration of distributed knowledge~\cite{BaltagS20,Baltag21CDK}.
It is parameterized by a set of sets of agents, $\alpha \subseteq \Pow{\Ag}$.
Intuitively, the formula $CD_\alpha\,\phi$ says that every group in~$\alpha$ has distributed knowledge of~$\phi$, and that every group has distributed knowledge that every group has distributed knowledge of~$\phi$, and so on ad infinitum.
Note that common distributed knowledge subsumes all the previously introduced operators.
Indeed, we have $K_a\,\phi \Leftrightarrow CD_{\{\{a\}\}}\,\phi$, and $C_A\,\phi \Leftrightarrow CD_{\{\{a\} \,\mid\, a \in A \}}\,\phi$, and $D_A\,\phi \Leftrightarrow CD_{\{A\}}\,\phi$.

In this paper, we will focus on a special case, where $\alpha$ contains all sets of exactly $2$ agents.
That is, $\alpha = \{ \{x, y\} \mid x \neq y \in \Ag \}$.
Since our examples deal with only three agents $a$, $b$, $c$, the resulting set of sets of agents is $\alpha = \{ \{a,b\}, \{a,c\}, \{b,c\}\}$.

For this particular $\alpha$, the semantics of $CD_\alpha\,\phi$ in a simplicial complex can be formulated as follows.
In a world~$w$, here is a common distributed knowledge of~$\phi$ among $\alpha$, when $\phi$ is true in every world~$w'$ in the $2$-connected component of~$w$.
Here, by $2$-connected component, we mean every world reachable from~$w$ by moving to adjacent worlds sharing a common edge.
More formally, $CD_\alpha\,\phi$ holds in world~$w$ when for every finite path $w_0, w_1, w_2, \ldots, w_k$ starting from the world $w_0 = w$, such that at each step, the worlds $w_i$ and $w_{i+1}$ share a common edge, the formula $\phi$ is true in~$w_k$.

The simplicial complex below shows how common distributed knowledge differs from common knowledge.
The actual world~$w$ is highlighted in orange.
\begin{center}
	\begin{tikzpicture}[scale=1.4, auto, font={\small}]
	\begin{pgfonlayer}{nodelayer}
		\node [style=none] (5) at (1, 0.65) {$p$};
		\node [style=yellowvertex] (6) at (1, 1.75) {$c$};
		\node [style=pinkvertex] (8) at (0, 0) {$a$};
		\node [style=bluevertex] (9) at (2, 0) {$b$};
		\node [style=none] (10) at (3, 0.65) {$p$};
		\node [style=none] (14) at (5, 0.65) {$p$};
		\node [style=yellowvertex] (15) at (4, 0) {$c$};
		\node [style=pinkvertex] (16) at (6, 0) {$a$};
		\node [style=bluevertex] (17) at (5, 1.75) {$b$};
		\node [style=none] (18) at (7, 0.65) {$p$};
		\node [style=yellowvertex] (19) at (7, 1.75) {$c$};
		\node [style=bluevertex] (21) at (8, 0) {$b$};
		\node [style=bluevertex] (22) at (8, 0) {$b$};
		\node [style=bluevertex] (23) at (8, 0) {$b$};
		\node [style=none] (24) at (9, 0.65) {$p$};
		\node [style=pinkvertex] (26) at (9, 1.75) {$a$};
		\node [style=none] (28) at (11, 0.65) {$\neg p$};
		\node [style=yellowvertex] (29) at (10, 0) {$c$};
		\node [style=pinkvertex] (30) at (12, 0) {$a$};
		\node [style=bluevertex] (31) at (11, 1.75) {$b$};
		\node [style=pinkvertex] (32) at (12, 0) {$a$};
		\node [style=pinkvertex] (33) at (3, 1.75) {$a$};
		\node [style=none] (34) at (2, 1.15) {$p$};
		\node [style=none] (35) at (8, 1.15) {$p$};
		\node [style=none] (36) at (6, 1.15) {$p$};
		\node [style=none] (37) at (4, 1.15) {$p$};
	\end{pgfonlayer}
	\begin{pgfonlayer}{edgelayer}
		\draw [style=filled path] (8) to (6);
		\draw [style=filled path] (6) to (9);
		\draw [style=filled path] (9) to (8);
		\draw [style=fill highlight] (6.center)
			 to (9.center)
			 to (8.center)
			 to cycle;
		\draw [style=filled path] (16.center)
			 to (15.center)
			 to (9.center)
			 to (6.center)
			 to (33.center)
			 to (17.center)
			 to (19.center)
			 to (26.center)
			 to (29.center)
			 to (23.center)
			 to cycle;
		\draw [style=fill highlight] (16) to (15);
		\draw [style=filled path] (15) to (17);
		\draw [style=filled path] (17) to (16);
		\draw [style=filled path] (15) to (17);
		\draw [style=filled path] (17) to (16);
		\draw [style=filled path] (19) to (21);
		\draw [style=filled path] (19) to (21);
		\draw [style=filled path] (30) to (29);
		\draw [style=filled path] (29) to (31);
		\draw [style=filled path] (31) to (30);
		\draw [style=filled path] (29.center)
			 to (31.center)
			 to (30.center)
			 to cycle;
		\draw [style=filled path] (9) to (33);
		\draw [style=filled path] (33) to (15);
		\draw [style=filled path] (16) to (19);
		\draw [style=filled path] (26) to (23);
	\end{pgfonlayer}
\end{tikzpicture}
\end{center}
Let $\alpha = \{ \{a,b\}, \{a,c\}, \{b,c\}\}$ contain all groups of two agents.
Then, in the highlighted world~$w$, the formula $CD_\alpha\,p$ holds.
Indeed, the $2$-connected component of~$w$ contains all worlds of the model, except for the rightmost one, which is connected only by a single vertex.
So, the world where $p$ is false is not reachable by a path of worlds connected via edges.
In contrast, the formula $C_{\{a,b,c\}}\,p$ does not hold in this model, because standard common knowledge is only concerned with $1$-connectivity.

\section{Studying the solvability of majority consensus}
\label{sec:application}

In this section, our goal is to study the solvability of the majority consensus task in the three distributed computing models introduced in \cref{sec:distrModels}, unreliable broadcast, immediate snapshot and test-and-set.
To show that a task is solvable, it suffices to present an algorithm, by giving the code that each agent executes.
To show that a task is not solvable, however, requires additional techniques.
We describe here an epistemic approach to prove impossibility results.
It relies on finding an epistemic logic formula called a \emph{logical obstruction}~\cite{gandalf-journal} that witnesses the fact that the task cannot be solved.

\subsection{Proving impossibility using an obstruction formula}

Recall that, in the majority consensus task, agents start the computation with a binary input value, $0$ or $1$.
Let $\cI$ denote the \emph{input complex} consisting of all possible binary input assignments.
The \emph{output complex} $\cO$ of the majority consensus task is the one described in \cref{sec:tasks}.
For a given model of communication, we also have a \emph{protocol complex}~$\cP$ describing the local states after $R$ rounds of communication.

In order to solve a task, each agent must decide an output value, based solely on its local state after communicating, in a way that is compatible with the task's requirements of agreement and validity.
Such decision values provide a labeling of the protocol complex, assigning each vertex (i.e., each local state) with a decision value.
A bit more abstractly, the data provided by the decision values can be thought of as a simplicial map $\delta : \cP \to \cO$, sending each vertex of $\cP$ to a vertex of $\cO$.
This is the key idea underlying the asynchronous computability theorem of Herlihy and Shavit~\cite{HS99}.

\begin{theorem}[Asynchronous Computability Theorem~\cite{HS99}]
 A task $\cO$ is \emph{solvable} using the protocol $\cP$ 
  if and only if there exists a simplicial map $\delta: \cP\rightarrow \cO$ that respects the task specification, that is, such that the diagram of simplicial complexes below commutes.
\begin{center}
\begin{tikzpicture}[auto]
  \node (p) at (1.5,1) {$\cP$};
  \node (o) at (3,0) {$\cO$};
  \node (i) at (0,0) {$\cI$};
  \draw[->] (i) to node {$\Xi$} (p);
  \draw[->] (p) to node {$\delta$} (o);
  \draw[->] (i) to node[swap] {$\Delta$} (o);
\end{tikzpicture}
\end{center}
\end{theorem}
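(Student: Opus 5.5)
The plan is to prove the two implications separately. Throughout, I use the setting of this paper: full-information protocols, and a protocol complex $\cP$ obtained from $\cI$ by iterating a distributed communication model $M$ for $R$ rounds, with $\Xi$ the carrier map sending each input simplex $\sigma$ to the subcomplex $\Xi(\sigma)$ of final states reachable from $\sigma$. I will read ``the diagram commutes'' in the usual carrier-map sense: for every input simplex $\sigma\in\cI$, we need $\delta(\Xi(\sigma))\subseteq\Delta(\sigma)$. Equality is not required, since $\Delta$ is a relation allowing several valid outputs.

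\emph{($\Rightarrow$)} Suppose some protocol solves the task. Since we may assume a full-information protocol, each agent's decision is a function of its final local state, which is a vertex $v$ of $\cP$; define $\delta(v)$ as the output vertex (the agent's name together with its decided value). This map is color-preserving by construction. For it to be simplicial, every world $\tau$ of $\cP$ must be sent to a simplex of $\cO$. Such a $\tau$ arises from some execution starting at an input simplex $\sigma$, and correctness of the protocol says that the tuple of decisions in that execution is a valid output for $\sigma$, i.e.\ $\delta(\tau)\in\Delta(\sigma)\subseteq\cO$. Faces of $\tau$ are then handled by downward closure of $\cO$. This gives both that $\delta$ is simplicial and that the diagram commutes.

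\emph{($\Leftarrow$)} Conversely, given $\delta$, the protocol is: run $R$ rounds of the full-information protocol, then agent $a$, holding local state $v$ (a vertex of $\cP$ colored $a$), decides $\delta(v)$. Because $\delta$ is defined on vertices, this decision uses only local information. In any execution from input $\sigma$, the final configuration is a world $\tau\in\Xi(\sigma)$, and the outputs form $\delta(\tau)\in\Delta(\sigma)$ by commutativity, so the task specification is met.

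The main obstacle is not the logic but the modelling: one must make precise that ``a decision depends only on the local state'' is exactly ``being a function on vertices'', and that simplexes of $\cP$ are exactly the compatible configurations. The latter holds because vertices of $\cP$ are local states, and two executions yielding the same local state for $a$ are identified at the same vertex. In the wait-free asynchronous setting of Herlihy--Shavit there is a further subtlety: a protocol may run for an unbounded number of steps, and partial executions correspond to lower-dimensional simplexes. For that case I would invoke the original theorem of \cite{HS99} (together with the simplicial approximation argument, replacing $\cP$ by an iterated chromatic subdivision). For the round-based oblivious models considered here, with a fixed $R$, the direct argument above suffices.
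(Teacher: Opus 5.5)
Your proof is correct, but it is not the paper's route: the paper gives no proof at all. It imports the statement from \cite{HS99} and then uses it as a black box in the knowledge-gain recipe.

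For the statement as phrased here, with a fixed protocol complex $\cP$ and carrier map $\Xi$, your two directions simply unfold what it means for deterministic decisions to depend only on local states. Your lax reading $\delta(\Xi(\sigma)) \subseteq \Delta(\sigma)$ of ``the diagram commutes'' is the right one. In \cite{herlihyetal:2013} this is essentially the \emph{definition} of a protocol solving a task.

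Two facts do all the work, and you should state them explicitly:
\begin{itemize}
\item Any protocol's local state is a function of the full-information local state, so decisions factor through the vertices of $\cP$.
\item The simplexes of $\cP$ are faces of global states of actual executions, not arbitrary sets of mutually compatible vertices. This is what makes $\delta$ simplicial.
\end{itemize}
Your argument also does not need $\cP$ to come from $R$ uniform rounds of an oblivious model. It only needs $\cP$ to be the complex of final states of some given protocol, so it covers the ad hoc two-round test-and-set algorithm as well.

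What the citation buys, and your argument does not, is the genuine Herlihy--Shavit theorem, in which no protocol is fixed in advance. There, wait-free read/write solvability is equivalent to the existence of a chromatic subdivision of $\cI$ with a color-preserving simplicial map to $\cO$ carried by $\Delta$. Its content lies in two non-trivial steps. First, every wait-free protocol complex receives such a subdivision through a map carried by $\Xi$, which uses the connectivity of read/write protocol complexes. Second, conversely, processes can converge on a simplex of any given chromatic subdivision. Your last paragraph only points to this. That deferral is harmless here, since the paper only ever applies the result to a fixed protocol complex.
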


This theorem is fundamental in the field of distributed computing, because it reduces a computational problem (solvability of a task) to a topological one (existence of a simplicial map).
This is extremely useful when we want to prove that no such decision map exists, because simplicial maps preserve topological invariants.

As we have seen, simplicial complexes can also be viewed as models for epistemic logic.
Thus, instead of relying on topology to find an impossibility proof, we will find an obstruction to the existence of $\delta$, based on epistemic logic.
The main tool to prove impossibility results using epistemic logic is the knowledge gain theorem~\cite{gandalf-journal}.
Intuitively, it says that the agents do not gain knowledge when they choose decision values.
Indeed, agents acquire new information during the communication phase of the protocol.
But then, the act of choosing a decision value is made locally and kept entirely private, so it cannot lead to new information being learned.

\begin{theorem}[Knowledge gain~\cite{gandalf-journal}]
Let $\delta : \cP \to \cO$ be a simplicial map, and $\phi$ a positive epistemic logic formula. Let $w \in \cP$ be a world of $\cP$.
If the formula $\phi$ is true in world $\delta(w)$, then it was already true in $w$: \,
$\cO, \delta(w) \models \phi \,\implies\, \cP, w \models \phi$.
\end{theorem}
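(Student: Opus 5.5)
The plan is a structural induction on the positive formula $\phi$. Positive formulas are built from atomic propositions and their negations using $\land$, $\lor$, and the box-type modalities $K_a$, $D_A$, $C_A$, $CD_\alpha$. Negation is allowed only in front of atoms, so no diamond-type operators occur. I assume the standard setup of the knowledge gain theorem. The valuation on $\cP$ is inherited from the inputs, and $\delta$ respects the task specification, so that the atomic facts true in $\delta(w)$ (the input values, via $\Delta$ and validity) are exactly those true in $w$. More generally, the atoms of $\cO$ are interpreted so that $\delta$ preserves them. I would make this assumption explicit first, since the base case of the induction rests entirely on it.

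The base and propositional cases are immediate. If an atom or negated atom holds in $\delta(w)$, it holds in $w$ by the compatibility of valuations just described. The cases for $\land$ and $\lor$ follow directly from the induction hypothesis, applied at the same world $w$.

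The heart of the argument is one geometric fact: a chromatic simplicial map preserves shared faces. Suppose $w$ and $w'$ are worlds of $\cP$ that share an $A$-colored face $X$. Then $\delta(X)$ is an $A$-colored face shared by $\delta(w)$ and $\delta(w')$, because $\delta$ is color-preserving and maps simplexes to simplexes. This handles $D_A$ (and hence $K_a = D_{\{a\}}$). Assume $\cO,\delta(w)\models D_A\psi$, and let $w'$ share an $A$-face with $w$. Then $\delta(w')$ shares an $A$-face with $\delta(w)$, so $\cO,\delta(w')\models\psi$. By the induction hypothesis applied at $w'$, we get $\cP,w'\models\psi$, and therefore $\cP,w\models D_A\psi$.

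For $C_A$ and $CD_\alpha$, I take a path $w=w_0,\dots,w_k$ in $\cP$ in which consecutive worlds share a face colored by a group in $\alpha$. Its image $\delta(w_0),\dots,\delta(w_k)$ is then a valid path in $\cO$, so $\psi$ holds at $\delta(w_k)$. The induction hypothesis transfers $\psi$ back to $w_k$.

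I expect the main obstacle to be the subtlety that $\delta$ might not send facets to facets, so that $\delta(w)$ fails to be a world of $\cO$. In the pure, same-dimension setting of this paper, a chromatic map sends an $(n-1)$-simplex with $n$ distinct colors to a simplex with $n$ distinct colors, which is therefore a facet of $\cO$. I would state this explicitly before running the induction, so that each inductive step is a one-line appeal to face preservation.
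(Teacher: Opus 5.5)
Your proof is correct and is essentially the standard argument of the cited source \cite{gandalf-journal}; the paper itself only quotes the theorem and gives no proof. That argument is a structural induction on positive formulas: atoms and negated atoms follow from valuation preservation, and each box modality ($K_a$, $D_A$, $C_A$, $CD_\alpha$) follows because a chromatic simplicial map sends $A$-adjacent worlds, and hence paths of such steps, to $A$-adjacent worlds. You were also right to state explicitly the hypotheses the paper's statement leaves out: $\delta$ must be color-preserving and must preserve the valuation, and, as the paper's footnote indicates, the latter is properly arranged by working in the product model $\cI[\cO]$ rather than in $\cO$ itself.
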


In the above theorem, a \emph{positive} epistemic formula is a formula that only talks about what the agents know, and not about what they do not know.
This theorem, together with Herlihy and Shavit's characterization of asynchronous computability, gives a simple, concrete recipe to prove that a task is not solvable\footnote{Expert readers will notice that we are cheating a little bit here. The output complex $\cO$ is annotated with output values, rather than inputs. So we cannot evaluate epistemic formulas on $\cO$ directly. Rather, this should be done on the \emph{product update model} $\cI[\cO]$, where we view the output complex as an action model, as in~\cite{gandalf-journal}. In the remainder of the paper, as a simplification, we pretend that $\cO$ itself is about input values. All the arguments that we present carry over to $\cI[\cO]$, but require a bit more care.}:
\begin{itemize}
\item Assume for contradiction that a decision map $\delta : \cP \to \cO$ exists.
\item Choose a positive epistemic logic formula $\phi$.
\item Find a world $w \in \cP$ in the protocol complex where the formula $\phi$ is false.
\item Show that $\phi$ is true in the image $\delta(w) \in \cO$.
\item This contradicts the knowledge gain theorem, so the initial assumption was incorrect.
\end{itemize}

Such a formula $\phi$ is called a \emph{logical obstruction}.
Intuitively, the formula $\phi$ describes some amount of knowledge which is a necessary condition to be able to solve the task $\cO$, but is not achieved using protocol~$\cP$.

\subsection{In the unreliable broadcast model}
\label{sec:unreliable-algo}

In the unreliable broadcast model, majority consensus can be solved in one round, using the \emph{courteous algorithm} from~\cite{FPR2026}:

\begin{center}
\begin{minipage}{0.8\textwidth}
\hrule
\smallskip 
\textbf{Courteous algorithm:}
\smallskip 
\hrule
\smallskip 
Let $(x_a, x_b, x_c)$ denote the input values of the three agents $a$, $b$, $c$.
Each agent $i \in \Ag$ broadcasts its input value, and receives a set $v_i = \{ (j, x_j) \}$ of input values (including its own), tagged with the corresponding agent~$j$.
So the set $v_i$ contains one, two, or three values.
Agent~$i$ decides its output according to the following rules:
\begin{itemize}
\item If all values are equal, decide this value.
\item If $\card{v_i} = 2$, decide $1-x_i$.
\item If $\card{v_i} = 3$ and there is a majority of $0$'s, decide $0$.
\item If $\card{v_i} = 3$ and there is a majority of $1$'s, decide $1-x_i$.
\end{itemize}
\smallskip 
\hrule
\end{minipage}
\end{center}
        
\smallskip

One can prove operationally that this algorithm is correct, or by labeling the vertices of the protocol complex after one round, and checking that the requirements of the task are not violated.
The decision values of the algorithm can be visualized in the picture below.
Recall that the input complex of the majority consensus task for three agents is a triangulated sphere (cf.\ \cref{sec:tasks}), where each of the three agent $a$, $b$, $c$ starts with input value $0$ or $1$.
In the protocol complex, each of the $8$ input triangles of the input complex is subdivided into seven triangles, as explained in \cref{sec:distrModels}.
The picture below only shows half of the unreliable broadcast protocol complex: it shows the subdivision of four possible input configurations.
From left to right, $(a_1,b_1,c_1)$, then $(a_1, b_0, c_1)$, then $(a_0, b_0, c_1)$, and finally $(a_0, b_0, c_0)$.
So on the left of the picture, all agents have input value~$1$; and on the right, all have input value~$0$.
\begin{center}
	\vspace{-0.3cm}
	\scalebox{0.9}{
	\begin{tikzpicture}[scale=1.2, auto, font={\small}]
	\begin{pgfonlayer}{nodelayer}
		\node [style=yellowvertex] (8) at (10, 0) {$c_1$};
		\node [style=pinkvertex, label={\color{purple}$\mathbf{1}$}] (9) at (14, 0) {$a_0$};
		\node [style=yellowvertex, label={\color{teal}$\mathbf{0}$}] (10) at (16, 0) {$c_1$};
		\node [style=pinkvertex] (11) at (20, 0) {};
		\node [style=bluevertex] (12) at (12, 3.5) {$b_0$};
		\node [style=yellowvertex] (14) at (13, 5.25) {$c_1$};
		\node [style=bluevertex] (16) at (15, 8.75) {};
		\node [style=pinkvertex] (17) at (17, 5.25) {};
		\node [style=bluevertex] (18) at (18, 3.5) {};
		\node [style=bluevertex, label={\color{teal}$\mathbf{0}$}] (19) at (15, 1.75) {$b_0$};
		\node [style=yellowvertex, label={\color{teal}$\mathbf{0}$}] (20) at (16, 3.5) {$c_1$};
		\node [style=pinkvertex, label={\color{teal}$\mathbf{0}$}] (21) at (14, 3.5) {$a_0$};
		\node [style=yellowvertex, label={\color{teal}$\mathbf{0}$}] (22) at (25, 8.75) {$c_0$};
		\node [style=pinkvertex, label={\color{teal}$\mathbf{0}$}] (23) at (23, 5.25) {$a_0$};
		\node [style=yellowvertex, label={\color{teal}$\mathbf{0}$}] (24) at (22, 3.5) {$c_0$};
		\node [style=pinkvertex, label={\color{teal}$\mathbf{0}$}] (25) at (20, 0) {$a_0$};
		\node [style=bluevertex, label={\color{teal}$\mathbf{0}$}] (26) at (21, 8.75) {$b_0$};
		\node [style=yellowvertex, label={\color{teal}$\mathbf{0}$}] (27) at (19, 8.75) {$c_0$};
		\node [style=bluevertex] (28) at (15, 8.75) {$b_0$};
		\node [style=pinkvertex, label={\color{teal}$\mathbf{0}$}] (29) at (17, 5.25) {$a_0$};
		\node [style=bluevertex, label={\color{teal}$\mathbf{0}$}] (30) at (18, 3.5) {$b_0$};
		\node [style=bluevertex, label={\color{teal}$\mathbf{0}$}] (31) at (21, 5.25) {$b_0$};
		\node [style=yellowvertex, label={\color{teal}$\mathbf{0}$}] (32) at (19, 5.25) {$c_0$};
		\node [style=pinkvertex, label={\color{teal}$\mathbf{0}$}] (33) at (20, 7.25) {$a_0$};
		\node [style=bluevertex, label={\color{purple}$\mathbf{1}$}] (46) at (0, 0) {$b_1$};
		\node [style=yellowvertex, label={\color{purple}$\mathbf{1}$}] (47) at (4, 0) {$c_1$};
		\node [style=bluevertex, label={\color{purple}$\mathbf{1}$}] (48) at (6, 0) {$b_1$};
		\node [style=pinkvertex] (49) at (10, 0) {};
		\node [style=pinkvertex, label={\color{purple}$\mathbf{1}$}] (50) at (2, 3.5) {$a_1$};
		\node [style=bluevertex, label={\color{purple}$\mathbf{1}$}] (51) at (3, 5.25) {$b_1$};
		\node [style=bluevertex] (52) at (5, 8.75) {};
		\node [style=pinkvertex] (53) at (7, 5.25) {};
		\node [style=bluevertex] (54) at (8, 3.5) {};
		\node [style=pinkvertex, label={\color{purple}$\mathbf{1}$}] (55) at (5, 1.75) {$a_1$};
		\node [style=bluevertex, label={\color{purple}$\mathbf{1}$}] (56) at (6, 3.5) {$b_1$};
		\node [style=yellowvertex, label={\color{purple}$\mathbf{1}$}] (57) at (4, 3.5) {$c_1$};
		\node [style=bluevertex, label={\color{teal}$\mathbf{0}$}] (58) at (15, 8.75) {$b_0$};
		\node [style=yellowvertex, label={\color{teal}$\mathbf{0}$}] (59) at (13, 5.25) {$c_1$};
		\node [style=bluevertex, label={\color{purple}$\mathbf{1}$}] (60) at (12, 3.5) {$b_0$};
		\node [style=yellowvertex, label={\color{purple}$\mathbf{1}$}] (61) at (10, 0) {$c_1$};
		\node [style=pinkvertex, label={\color{teal}$\mathbf{0}$}] (62) at (11, 8.75) {$a_1$};
		\node [style=bluevertex, label={\color{purple}$\mathbf{1}$}] (63) at (9, 8.75) {$b_0$};
		\node [style=pinkvertex, label={\color{purple}$\mathbf{1}$}] (64) at (5, 8.75) {$a_1$};
		\node [style=yellowvertex, label={\color{purple}$\mathbf{1}$}] (65) at (7, 5.25) {$c_1$};
		\node [style=pinkvertex, label={\color{purple}$\mathbf{1}$}] (66) at (8, 3.5) {$a_1$};
		\node [style=pinkvertex, label={\color{teal}$\mathbf{0}$}] (67) at (11, 5.25) {$a_1$};
		\node [style=bluevertex, label={\color{purple}$\mathbf{1}$}] (68) at (9, 5.25) {$b_0$};
		\node [style=yellowvertex, label={\color{teal}$\mathbf{0}$}] (69) at (10, 7.25) {$c_1$};
	\end{pgfonlayer}
	\begin{pgfonlayer}{edgelayer}
		\draw [style=filled path] (8.center)
			 to (12.center)
			 to (9.center)
			 to cycle;
		\draw [style=filled path] (19.center)
			 to (10.center)
			 to (9.center)
			 to cycle;
		\draw [style=filled path] (19.center)
			 to (21.center)
			 to (20.center)
			 to cycle;
		\draw [style=filled path] (12.center)
			 to (21.center)
			 to (14.center)
			 to cycle;
		\draw [style=filled path] (17.center)
			 to (18.center)
			 to (20.center)
			 to cycle;
		\draw [style=filled path] (11.center)
			 to (10.center)
			 to (18.center)
			 to cycle;
		\draw [style=filled path] (17.center)
			 to (16.center)
			 to (14.center)
			 to cycle;
		\draw [style=filled path] (22.center)
			 to (26.center)
			 to (23.center)
			 to cycle;
		\draw [style=filled path] (31.center)
			 to (24.center)
			 to (23.center)
			 to cycle;
		\draw [style=filled path] (31.center)
			 to (33.center)
			 to (32.center)
			 to cycle;
		\draw [style=filled path] (26.center)
			 to (33.center)
			 to (27.center)
			 to cycle;
		\draw [style=filled path] (29.center)
			 to (30.center)
			 to (32.center)
			 to cycle;
		\draw [style=filled path] (25.center)
			 to (24.center)
			 to (30.center)
			 to cycle;
		\draw [style=filled path] (29.center)
			 to (28.center)
			 to (27.center)
			 to cycle;
		\draw [style=arrow] (8) to (12);
		\draw [style=arrow] (8) to (9);
		\draw [style=arrow] (9) to (19);
		\draw [style=arrow] (10) to (19);
		\draw [style=arrow] (25) to (10);
		\draw [style=arrow] (25) to (30);
		\draw [style=arrow] (25) to (24);
		\draw [style=arrow] (24) to (31);
		\draw [style=arrow] (23) to (31);
		\draw [style=arrow] (30) to (20);
		\draw [style=arrow] (29) to (20);
		\draw [style=arrow] (28) to (14);
		\draw [style=arrow] (28) to (29);
		\draw [style=arrow] (28) to (27);
		\draw [style=arrow] (27) to (33);
		\draw [style=arrow] (26) to (33);
		\draw [style=arrow] (22) to (23);
		\draw [style=arrow] (22) to (26);
		\draw [style=double arrow] (27) to (26);
		\draw [style=double arrow] (23) to (24);
		\draw [style=double arrow] (29) to (30);
		\draw [style=double arrow] (9) to (10);
		\draw [style=double arrow] (14) to (12);
		\draw [style=double arrow] (21) to (20);
		\draw [style=double arrow] (20) to (19);
		\draw [style=double arrow] (19) to (21);
		\draw [style=double arrow] (32) to (33);
		\draw [style=double arrow] (33) to (31);
		\draw [style=double arrow] (31) to (32);
		\draw [style=arrow] (29) to (32);
		\draw [style=arrow] (30) to (32);
		\draw [style=filled path] (46.center)
			 to (50.center)
			 to (47.center)
			 to cycle;
		\draw [style=filled path] (55.center)
			 to (48.center)
			 to (47.center)
			 to cycle;
		\draw [style=filled path] (55.center)
			 to (57.center)
			 to (56.center)
			 to cycle;
		\draw [style=filled path] (50.center)
			 to (57.center)
			 to (51.center)
			 to cycle;
		\draw [style=filled path] (53.center)
			 to (54.center)
			 to (56.center)
			 to cycle;
		\draw [style=filled path] (49.center)
			 to (48.center)
			 to (54.center)
			 to cycle;
		\draw [style=filled path] (53.center)
			 to (52.center)
			 to (51.center)
			 to cycle;
		\draw [style=filled path] (58.center)
			 to (62.center)
			 to (59.center)
			 to cycle;
		\draw [style=filled path] (67.center)
			 to (60.center)
			 to (59.center)
			 to cycle;
		\draw [style=filled path] (67.center)
			 to (69.center)
			 to (68.center)
			 to cycle;
		\draw [style=filled path] (62.center)
			 to (69.center)
			 to (63.center)
			 to cycle;
		\draw [style=filled path] (65.center)
			 to (66.center)
			 to (68.center)
			 to cycle;
		\draw [style=filled path] (61.center)
			 to (60.center)
			 to (66.center)
			 to cycle;
		\draw [style=filled path] (65.center)
			 to (64.center)
			 to (63.center)
			 to cycle;
		\draw [style=arrow] (46) to (50);
		\draw [style=arrow] (46) to (47);
		\draw [style=arrow] (47) to (55);
		\draw [style=arrow] (48) to (55);
		\draw [style=arrow] (61) to (48);
		\draw [style=arrow] (61) to (66);
		\draw [style=arrow] (61) to (60);
		\draw [style=arrow] (60) to (67);
		\draw [style=arrow] (59) to (67);
		\draw [style=arrow] (66) to (56);
		\draw [style=arrow] (65) to (56);
		\draw [style=arrow] (64) to (51);
		\draw [style=arrow] (64) to (65);
		\draw [style=arrow] (64) to (63);
		\draw [style=arrow] (63) to (69);
		\draw [style=arrow] (62) to (69);
		\draw [style=arrow] (58) to (59);
		\draw [style=arrow] (58) to (62);
		\draw [style=double arrow] (63) to (62);
		\draw [style=double arrow] (59) to (60);
		\draw [style=double arrow] (65) to (66);
		\draw [style=double arrow] (47) to (48);
		\draw [style=double arrow] (51) to (50);
		\draw [style=double arrow] (57) to (56);
		\draw [style=double arrow] (56) to (55);
		\draw [style=double arrow] (55) to (57);
		\draw [style=double arrow] (68) to (69);
		\draw [style=double arrow] (69) to (67);
		\draw [style=double arrow] (67) to (68);
		\draw [style=arrow] (65) to (68);
		\draw [style=arrow] (66) to (68);
		\draw [style=fill red] (50.center)
			 to (47.center)
			 to (46.center)
			 to cycle;
		\draw [style=fill red] (47.center)
			 to (55.center)
			 to (48.center)
			 to cycle;
		\draw [style=fill red] (56.center)
			 to (55.center)
			 to (57.center)
			 to cycle;
		\draw [style=fill red] (50.center)
			 to (57.center)
			 to (51.center)
			 to cycle;
		\draw [style=fill red] (51.center)
			 to (64.center)
			 to (63.center)
			 to (65.center)
			 to cycle;
		\draw [style=fill red] (65.center)
			 to (68.center)
			 to (66.center)
			 to (56.center)
			 to cycle;
		\draw [style=fill red] (61.center)
			 to (48.center)
			 to (66.center)
			 to (60.center)
			 to (9.center)
			 to cycle;
	\end{pgfonlayer}
\end{tikzpicture}
	}
	\smallskip
\end{center}
The number above each vertex ($0$ or $1$) corresponds to the decision value of the agent given this local state.
We can easily check that the algorithm is correct: for each triangle of the protocol complex, either all agents decide~$1$, or there is a majority of~$0$.

Let us analyze the knowledge of the agents in this protocol complex.
Here, we are only concerned about what the agents know about each other's input values.
In fact, all we will use is an atomic formula that says ``all agents have input value~$1$'', that we denote by $\text{all}_1$.
This atomic formula is true in the $7$ worlds with vertices $(a_1,b_1,c_1)$ on the left of the picture; and it is false everywhere else.
Moreover, let $\alpha = \{ \{a,b\}, \{a,c\}, \{b,c\}\}$, and consider the formula $\Phi = CD_\alpha (\neg \text{all}_1)$.
Namely, $\Phi$ says: ``there is common distributed knowledge among all groups of two agents that not everyone has input value~$1$''.
The red region in the picture of the protocol complex corresponds to the worlds where the formula $\Phi$ is false; and the blue region, those where $\Phi$ is true.

With that in mind, we can reformulate the majority consensus algorithm in terms of knowledge.
Indeed, notice that agents decide~$1$ exactly when they are bordering the red region; and they decide~$0$ otherwise.
So, the majority consensus algorithm presented above can be reformulated as follows: \emph{If I know the formula~$\Phi$, decide~$0$, otherwise decide~$1$}.

\subsection{In the immediate snapshot model}

The immediate snapshot model is known to be a very weak model of computation.
Starting from the input complex, running one round of the immediate snapshot protocol results in a subdivision of each simplex of the input, without introducing any holes (see \cref{sec:distrModels}).
Running several rounds of the protocol simply subdivides more and more each simplex.
Crucially, this means that the topology of the input complex is preserved, which allows to prove many impossibility results~\cite{herlihyetal:2013}.

In the case of the majority consensus task, we have seen that the binary input complex is a triangulated sphere.
When the agents communicate using the immediate snapshot protocol for any finite number of rounds, the resulting protocol complex is still a combinatorial sphere, although with exponentially many more triangles.
Importantly, after any number of rounds, the protocol complex remains $2$-connected (in the sense of edge connectivity, not homotopical connectivity, see \cref{sec:CDK}).

Consider the same formula $\Phi = CD_\alpha (\neg \text{all}_1)$ as in the previous section.
Recall that common distributed knowledge among all groups of two agents, $CD_\alpha$, amounts to checking that the formula holds everywhere in the $2$-connected component of the actual world.
Since the protocol complex $\cP$ of the immediate snapshot model is $2$-connected, the formula $\Phi$ is false everywhere in $\cP$.
Indeed, no matter which world is the actual world, there always exists a path, moving through worlds that share a common edge, reaching a world where all agents started with input value~$1$.

Let $w$ be any world of $\cP$ in which all agents started with input value~$0$, $(a_0, b_0, c_0)$.
According to the validity requirement of the majority consensus task, the only allowed decision values are $(0, 0, 0)$.
So, the decision map~$\delta$, if it exists, must send world~$w$ to the world $\delta(w) = (0, 0, 0)$ of the output complex~$\cO$.
However, in the output complex, it is easy to see that the formula $\Phi$ is true in world $\delta(w)$.
Since we have already seen that~$\Phi$ is false in~$w$, this contradicts the knowledge gain theorem. In conclusion:

\begin{theorem}
The majority consensus task is not solvable in the immediate snapshot model, for any number of rounds.
\end{theorem}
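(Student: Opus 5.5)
We argue by contradiction, following the recipe given after the knowledge gain theorem. Fix a number of rounds $R \geq 0$ and let $\cP$ be the protocol complex of the immediate snapshot model after $R$ rounds, starting from the binary input complex $\cI$. Suppose a decision map $\delta : \cP \to \cO$ exists, as in the Asynchronous Computability Theorem. As the logical obstruction we take $\Phi = CD_\alpha(\neg \text{all}_1)$ with $\alpha = \{\{a,b\},\{a,c\},\{b,c\}\}$. This formula is positive: it only involves distributed knowledge operators applied to an atomic (negated-atom) fact about inputs, with no negation in front of an epistemic operator. Hence the knowledge gain theorem applies to it.

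\textbf{Step 1: $\Phi$ is false everywhere in $\cP$.} The key claim is that $\cP$ is $2$-connected in the edge sense of \cref{sec:CDK}: any two worlds are joined by a path of worlds in which consecutive worlds share an edge. I would prove this in two parts.

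\emph{Part (a).} The input complex $\cI$ is edge-connected. It is the boundary of the octahedron, in which adjacent triangles share edges.

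\emph{Part (b).} One round of immediate snapshot preserves edge-connectivity. By induction on $R$, this reduces to two facts:
\begin{itemize}
\item The chromatic subdivision of a single triangle is edge-connected. This can be checked directly on the 13-triangle picture of \cref{sec:distrModels}.
\item For each edge $\{x,y\}$ of an input triangle, the subdivision of that edge is the same in both triangles sharing it. This holds because the views of $x$ and $y$ in executions where they do not see the third agent depend only on their own inputs. So the subdivisions of adjacent triangles are glued along a subdivided edge. In particular, some triangle on each side contains a common edge, namely one of the subdivided edge segments.
\end{itemize}
Combining these, any path in the old complex lifts to an edge-path in the new one.

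With edge-connectivity in hand, every world of $\cP$ is edge-connected to a world whose inputs are $(1,1,1)$, where $\text{all}_1$ holds. So $\Phi$ fails everywhere in $\cP$. I expect this edge-connectivity induction to be the main technical point, specifically the gluing along subdivided edges. The rest is bookkeeping.

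\textbf{Step 2: $\Phi$ holds in the image of an all-$0$ world.} Pick a world $w \in \cP$ lying over the input $(a_0,b_0,c_0)$; for instance, the run where every agent sees all others in every round. By validity, $\delta(w)$ must be the output triangle $(0,0,0)$. In $\cO$, under the simplification of the paper's footnote, we compute the edge-connected component of $(0,0,0)$. This triangle shares an edge with each of the triangles $(0,0,1)$, $(0,1,0)$ and $(1,0,0)$. Those triangles share no edge with $(1,1,1)$, since any two of these triangles share at most one vertex. So the edge-component of $(0,0,0)$ avoids the only world where $\text{all}_1$ holds, and $\cO,\delta(w) \models \Phi$.

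Here I must be careful about the footnote: the honest check is in $\cI[\cO]$. There, one has to verify that worlds with outputs in the component of $(0,0,0)$ never have all inputs equal to $1$. This follows because validity forces output $(1,1,1)$ on input $(1,1,1)$.

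\textbf{Step 3: conclude.} By Step 2, $\cO,\delta(w) \models \Phi$, so the knowledge gain theorem gives $\cP, w \models \Phi$. This contradicts Step 1. Since $R$ was arbitrary, no decision map exists for any number of rounds, and by the Asynchronous Computability Theorem the task is unsolvable.
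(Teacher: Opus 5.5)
Your proposal is correct and follows the paper's argument. You use the same obstruction $\Phi = CD_\alpha(\neg\text{all}_1)$, show it is false everywhere in $\cP$ by edge-connectivity, and show it is true at $\delta(w) = (0,0,0)$ by validity, contradicting knowledge gain; the only differences are that you prove the edge-connectivity by induction over chromatic subdivisions (which the paper just asserts from the sphere structure) and you carry out the $\cI[\cO]$ check that the paper defers to a footnote.
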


\subsection{In the test-and-set model}

We now investigate the solvability of majority consensus in the test-and-set model.

\paragraph*{In one round}
First, let us try to apply the knowledge-based algorithm from \cref{sec:unreliable-algo}: at the end of communication, if the agent knows the formula $\Phi = CD_\alpha (\neg \text{all}_1)$, it decides~$0$, otherwise it decides~$1$.
For the one round test-and-set model, the corresponding decision map can be visualized in the picture below.
As before, the red region indicates the worlds where formula~$\Phi$ is false.
Decision values are written above each vertex of the protocol complex.
\begin{center}
\scalebox{0.9}{
	\begin{tikzpicture}[scale=1.2, auto, font={\small}]
	\begin{pgfonlayer}{nodelayer}
		\node [style=yellowvertex] (8) at (0, 0) {$c_1$};
		\node [style=pinkvertex] (11) at (10, 0) {};
		\node [style=yellowvertex, label={\color{teal}$\mathbf{0}$}] (22) at (15, 8.75) {$c_0$};
		\node [style=pinkvertex, label={\color{teal}$\mathbf{0}$}] (25) at (10, 0) {$a_0$};
		\node [style=bluevertex] (28) at (5, 8.75) {$b_0$};
		\node [style=yellowvertex, label={\color{teal}$\mathbf{0}$}] (50) at (6.25, 0) {$c_1$};
		\node [style=pinkvertex, label={\color{purple}$\mathbf{1}$}] (51) at (3.75, 0) {$a_0$};
		\node [style=yellowvertex, label={\color{teal}$\mathbf{0}$}] (52) at (8.75, 8.75) {$c_0$};
		\node [style=bluevertex, label={\color{teal}$\mathbf{0}$}] (53) at (11.25, 8.75) {$b_0$};
		\node [style=yellowvertex] (54) at (3, 5.25) {$c_1$};
		\node [style=bluevertex] (55) at (2, 3.5) {$b_0$};
		\node [style=pinkvertex, label={\color{teal}$\mathbf{0}$}] (56) at (7, 5.25) {$a_0$};
		\node [style=bluevertex, label={\color{teal}$\mathbf{0}$}] (57) at (8, 3.5) {$b_0$};
		\node [style=bluevertex, label={\color{purple}$\mathbf{1}$}] (58) at (5, 2) {$b_0$};
		\node [style=pinkvertex, label={\color{purple}$\mathbf{1}$}] (59) at (4, 3.5) {$a_0$};
		\node [style=yellowvertex, label={\color{teal}$\mathbf{0}$}] (60) at (6, 3.5) {$c_1$};
		\node [style=yellowvertex, label={\color{teal}$\mathbf{0}$}] (61) at (9, 5.25) {$c_0$};
		\node [style=pinkvertex, label={\color{teal}$\mathbf{0}$}] (62) at (10, 7) {$a_0$};
		\node [style=bluevertex, label={\color{teal}$\mathbf{0}$}] (63) at (11, 5.25) {$b_0$};
		\node [style=yellowvertex, label={\color{teal}$\mathbf{0}$}] (64) at (12, 3.5) {$c_0$};
		\node [style=pinkvertex, label={\color{teal}$\mathbf{0}$}] (65) at (13, 5.25) {$a_0$};
		\node [style=bluevertex, label={\color{purple}$\mathbf{1}$}] (76) at (-10, 0) {$b_1$};
		\node [style=pinkvertex] (77) at (0, 0) {};
		\node [style=bluevertex, label={\color{teal}$\mathbf{0}$}] (78) at (5, 8.75) {$b_0$};
		\node [style=yellowvertex, label={\color{purple}$\mathbf{1}$}] (79) at (0, 0) {$c_1$};
		\node [style=pinkvertex, label={\color{purple}$\mathbf{1}$}] (80) at (-5, 8.75) {$a_1$};
		\node [style=bluevertex, label={\color{purple}$\mathbf{1}$}] (81) at (-3.75, 0) {$b_1$};
		\node [style=yellowvertex, label={\color{purple}$\mathbf{1}$}] (82) at (-6.25, 0) {$c_1$};
		\node [style=bluevertex, label={\color{purple}$\mathbf{1}$}] (83) at (-1.25, 8.75) {$b_0$};
		\node [style=pinkvertex, label={\color{teal}$\mathbf{0}$}] (84) at (1.25, 8.75) {$a_1$};
		\node [style=bluevertex, label={\color{purple}$\mathbf{1}$}] (85) at (-7, 5.25) {$b_1$};
		\node [style=pinkvertex, label={\color{purple}$\mathbf{1}$}] (86) at (-8, 3.5) {$a_1$};
		\node [style=yellowvertex, label={\color{purple}$\mathbf{1}$}] (87) at (-3, 5.25) {$c_1$};
		\node [style=pinkvertex, label={\color{purple}$\mathbf{1}$}] (88) at (-2, 3.5) {$a_1$};
		\node [style=pinkvertex, label={\color{purple}$\mathbf{1}$}] (89) at (-5, 2) {$a_1$};
		\node [style=yellowvertex, label={\color{purple}$\mathbf{1}$}] (90) at (-6, 3.5) {$c_1$};
		\node [style=bluevertex, label={\color{purple}$\mathbf{1}$}] (91) at (-4, 3.5) {$b_1$};
		\node [style=bluevertex, label={\color{purple}$\mathbf{1}$}] (92) at (-1, 5.25) {$b_0$};
		\node [style=yellowvertex, label={\color{purple}$\mathbf{1}$}] (93) at (0, 7) {$c_1$};
		\node [style=pinkvertex, label={\color{purple}$\mathbf{1}$}] (94) at (1, 5.25) {$a_1$};
		\node [style=bluevertex, label={\color{purple}$\mathbf{1}$}] (95) at (2, 3.5) {$b_0$};
		\node [style=yellowvertex, label={\color{teal}$\mathbf{0}$}] (96) at (3, 5.25) {$c_1$};
		\node [style=none] (97) at (0, 7) {};
		\node [style=none] (98) at (5, 8.75) {};
		\node [style=none] (99) at (1, 5.25) {};
	\end{pgfonlayer}
	\begin{pgfonlayer}{edgelayer}
		\draw [style=arrow] (8) to (55);
		\draw [style=arrow] (8) to (51);
		\draw [style=arrow] (8) to (59);
		\draw [style=arrow] (8) to (58);
		\draw [style=arrow] (25) to (50);
		\draw [style=arrow] (25) to (57);
		\draw [style=arrow] (25) to (58);
		\draw [style=arrow] (25) to (60);
		\draw [style=arrow] (25) to (61);
		\draw [style=arrow] (25) to (63);
		\draw [style=arrow] (25) to (64);
		\draw [style=arrow] (22) to (65);
		\draw [style=arrow] (22) to (63);
		\draw [style=arrow] (22) to (53);
		\draw [style=arrow] (22) to (62);
		\draw [style=arrow] (28) to (52);
		\draw [style=arrow] (28) to (62);
		\draw [style=arrow] (28) to (61);
		\draw [style=arrow] (28) to (56);
		\draw [style=arrow] (28) to (60);
		\draw [style=arrow] (28) to (59);
		\draw [style=arrow] (54) to (59);
		\draw [style=arrow] (55) to (59);
		\draw [style=arrow] (51) to (58);
		\draw [style=arrow] (50) to (58);
		\draw [style=arrow] (56) to (60);
		\draw [style=arrow] (57) to (60);
		\draw [style=arrow] (57) to (61);
		\draw [style=arrow] (56) to (61);
		\draw [style=arrow] (52) to (62);
		\draw [style=arrow] (53) to (62);
		\draw [style=arrow] (65) to (63);
		\draw [style=arrow] (64) to (63);
		\draw [style=double arrow] (59) to (60);
		\draw [style=double arrow] (60) to (58);
		\draw [style=double arrow] (58) to (59);
		\draw [style=double arrow] (62) to (61);
		\draw [style=double arrow] (61) to (63);
		\draw [style=double arrow] (63) to (62);
		\draw [style=arrow] (28) to (54);
		\draw [style=filled path] (58.center)
			 to (51.center)
			 to (8.center)
			 to (55.center)
			 to (59.center)
			 to cycle;
		\draw [style=filled path] (25.center)
			 to (64.center)
			 to (63.center)
			 to (61.center)
			 to (57.center)
			 to (60.center)
			 to (58.center)
			 to (50.center)
			 to cycle;
		\draw [style=filled path] (60.center)
			 to (56.center)
			 to (61.center)
			 to (62.center)
			 to (52.center)
			 to (28.center)
			 to (54.center)
			 to (59.center)
			 to cycle;
		\draw [style=filled path] (53.center)
			 to (62.center)
			 to (63.center)
			 to (65.center)
			 to (22.center)
			 to cycle;
		\draw [style=arrow] (76) to (86);
		\draw [style=arrow] (76) to (82);
		\draw [style=arrow] (76) to (90);
		\draw [style=arrow] (76) to (89);
		\draw [style=arrow] (79) to (81);
		\draw [style=arrow] (79) to (88);
		\draw [style=arrow] (79) to (89);
		\draw [style=arrow] (79) to (91);
		\draw [style=arrow] (79) to (92);
		\draw [style=arrow] (79) to (94);
		\draw [style=arrow] (79) to (95);
		\draw [style=arrow] (78) to (96);
		\draw [style=arrow] (78) to (94);
		\draw [style=arrow] (78) to (84);
		\draw [style=arrow] (78) to (93);
		\draw [style=arrow] (80) to (83);
		\draw [style=arrow] (80) to (93);
		\draw [style=arrow] (80) to (92);
		\draw [style=arrow] (80) to (87);
		\draw [style=arrow] (80) to (91);
		\draw [style=arrow] (80) to (90);
		\draw [style=arrow] (85) to (90);
		\draw [style=arrow] (86) to (90);
		\draw [style=arrow] (82) to (89);
		\draw [style=arrow] (81) to (89);
		\draw [style=arrow] (87) to (91);
		\draw [style=arrow] (88) to (91);
		\draw [style=arrow] (88) to (92);
		\draw [style=arrow] (87) to (92);
		\draw [style=arrow] (83) to (93);
		\draw [style=arrow] (84) to (93);
		\draw [style=arrow] (96) to (94);
		\draw [style=arrow] (95) to (94);
		\draw [style=double arrow] (90) to (91);
		\draw [style=double arrow] (91) to (89);
		\draw [style=double arrow] (89) to (90);
		\draw [style=double arrow] (93) to (92);
		\draw [style=double arrow] (92) to (94);
		\draw [style=double arrow] (94) to (93);
		\draw [style=arrow] (80) to (85);
		\draw [style=filled path] (89.center)
			 to (82.center)
			 to (76.center)
			 to (86.center)
			 to (90.center)
			 to cycle;
		\draw [style=filled path] (79.center)
			 to (95.center)
			 to (94.center)
			 to (92.center)
			 to (88.center)
			 to (91.center)
			 to (89.center)
			 to (81.center)
			 to cycle;
		\draw [style=filled path] (91.center)
			 to (87.center)
			 to (92.center)
			 to (93.center)
			 to (83.center)
			 to (80.center)
			 to (85.center)
			 to (90.center)
			 to cycle;
		\draw [style=filled path] (84.center)
			 to (93.center)
			 to (94.center)
			 to (96.center)
			 to (78.center)
			 to cycle;
		\draw [style=fill red] (82.center)
			 to (76.center)
			 to (86.center)
			 to (90.center)
			 to (89.center)
			 to cycle;
		\draw [style=fill red] (91.center)
			 to (89.center)
			 to (81.center)
			 to (79.center)
			 to (51.center)
			 to (58.center)
			 to (59.center)
			 to (95.center)
			 to (94.center)
			 to (92.center)
			 to (88.center)
			 to cycle;
		\draw [style=fill red] (87.center)
			 to (92.center)
			 to (93.center)
			 to (83.center)
			 to (80.center)
			 to (85.center)
			 to (90.center)
			 to (91.center)
			 to cycle;
		\draw [style=crossed section] (98.center)
			 to (97.center)
			 to (99.center)
			 to cycle;
	\end{pgfonlayer}
\end{tikzpicture}
}
\end{center}
Notice that the resulting algorithm is incorrect: in the crosshatched triangle, with input values $(a_1, b_0, c_1)$, the three decision values are $(1, 0, 1)$. This contradict the majority agreement requirement of the task, since there is a majority of~$1$'s.
The issue is that, in this world, only one agent knows the formula~$\Phi$: both $a$ and $c$ are on the border of the red region, so they consider possible that $\Phi$ might be false.

However, in the majority consensus output complex~$\cO$, this situation cannot happen: in all worlds, either $0$ or $\geq 2$ agents know $\Phi$.
This statement\footnote{Not exactly: ``$0$ agents know $\Phi$'' is not a positive formula, so we need to reformulate a bit.} will serve as our obstruction formula to show that majority consensus is not solvable in the one round test-and-set model.

\begin{theorem}
Majority consensus is not solvable in the one round test-and-set model.
\end{theorem}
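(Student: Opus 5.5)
We follow the recipe of the knowledge gain theorem: exhibit a positive formula that is true in every world of the output complex $\cO$ (or at least in the image $\delta(w)$ of some world $w$), yet false in some world $w$ of the one-round test-and-set protocol complex $\cP$. Let $\Phi = CD_\alpha(\neg\text{all}_1)$. The footnote tells us to avoid ``$0$ agents know $\Phi$''. We replace it by a positive statement of the form ``if one agent knows $\Phi$, then a second one does too''. Concretely, we take
\[
\Psi \;=\; \bigwedge_{x\in\Ag} \Bigl(K_x\Phi \Rightarrow \bigvee_{y\neq x} K_y\Phi\Bigr).
\]
This is not literally positive because of the implication. So instead we argue world by world. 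We fix the world $w$ of $\cP$ with inputs $(a_1,b_0,c_1)$ in the crosshatched triangle. We then consider the disjunction $\neg\Phi \vee \bigvee_{x\neq y}(K_x\Phi\wedge K_y\Phi)$, rephrased using the negation of an atomic input formula where needed, so that it is positive modulo literals on inputs. If that rephrasing fails, we instead use the formula $K_x\Phi\wedge K_y\Phi$ for a suitable pair, together with a case analysis on $\delta(w)$.

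\textbf{Step 1 (structure of $\cO$).} In the output complex $\cO$ (really $\cI[\cO]$), the worlds where some agent decides $0$ are exactly the worlds with a majority of $0$'s or with all $0$'s. We check that $\Phi$ holds in every world whose decision vector contains a $0$. Indeed, the $2$-connected component of a world with at least two $0$-deciders never reaches the all-$1$ triangle, since edge-adjacency preserves two colours and hence at least one $0$ output value, which forces a $0$ input by validity. So in $\cO$, the agents whose vertices lie only in such triangles know $\Phi$. The combinatorial claim is that the $0$-decided vertices are exactly those, and in each triangle their number is $0$, $2$ or $3$. Hence in $\cO$, whenever an agent knows $\Phi$, at least two agents do.

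\textbf{Step 2 (the protocol side).} We must show that for \emph{any} decision map $\delta$, some world of $\cP$ has exactly one agent knowing $\Phi$, or else the knowledge required in $\delta(w)$ fails in $w$. First we compute $\Phi$ on $\cP$ from the red/blue picture, which the $2$-connectivity inspection gives. Then we use validity. The all-$0$ region must map to $(0,0,0)$, where everyone knows $\Phi$, so by knowledge gain those vertices of $\cP$ must know $\Phi$. The all-$1$ region maps to $(1,1,1)$. We follow the chain of triangles between these regions through the crosshatched area near $(a_1,b_0,c_1)$. At each step, the decision $\delta$ can output $0$ only at vertices knowing $\Phi$, by knowledge gain applied to $K_x\Phi$, which is positive. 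So a $0$-decision at a vertex on the red border is forbidden. In the crosshatched triangle only $b$ knows $\Phi$, so at most $b$ decides $0$, and the output vector is $(1,0,1)$ or $(1,1,1)$.

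\textbf{Step 3 (closing the argument, the main obstacle).} The outcome $(1,0,1)$ violates majority agreement. The outcome $(1,1,1)$ must be propagated: we show that an edge-path of worlds forces a world where the decisions are inconsistent with validity. We intend to use the fact that in the test-and-set complex the red region's boundary, taken together with the all-$0$ side, forms a cycle around a hole. Along this cycle, $b$'s $0$-decisions (forced near the $(a_0,b_0,c_1)$ side) meet $1$-decisions. Each transition from all-$1$ outputs to a majority of $0$ outputs must pass through a triangle where exactly one agent switches, which requires a second agent to know $\Phi$ at a red-border vertex, a contradiction. This last combinatorial step is the one we expect to be delicate. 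Our first attempt is an exhaustive check over the handful of triangles adjacent to the crosshatched one.
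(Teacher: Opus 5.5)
There is a genuine gap, and it is where you flag it: Step 3 is not an argument. The mechanism you describe cannot occur as stated. Two edge-adjacent worlds share two decisions, so majority agreement forces any world sharing an edge with an all-$1$ world to be all-$1$ itself. Hence no triangle on an edge-path has ``exactly one agent switching'' out of the all-$1$ region.

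Your Steps 1--2 are sound. Applying knowledge gain to the positive formulas $K_a\Phi$ and $K_c\Phi$ at the crosshatched world $w$ correctly forces $a$ and $c$ to decide $1$, leaving $(1,0,1)$ or $(1,1,1)$. The missing observation is the one that excludes $(1,1,1)$ directly. In $w$, agent $b$ is the test-and-set winner, so its vertex is the solo vertex where $b$ saw only its own input $0$. That vertex is shared by every world in which $b$ wins with input $0$, including worlds with inputs $(a_0,b_0,c_0)$. Validity forces $\delta$ to send those worlds to $(0,0,0)$, so $b$ decides $0$ at $w$. The decisions at $w$ are therefore $(1,0,1)$, violating majority agreement. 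You do note that the all-$0$ region must map to $(0,0,0)$, but you draw only an epistemic consequence from it, not the decision constraint it places on $b$ at $w$.

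The paper packages this epistemically, and doing so also resolves your trouble making ``$\neg\Phi$'' positive: replace $\neg\Phi_1$ by $\Phi_0 = CD_\alpha(\neg\text{all}_0)$.
\begin{itemize}
\item In $\cI[\cO]$, the worlds where everyone decides $1$ form an edge-connected region that avoids the all-$0$ input world, so $\Phi_0$ holds there.
\item Every other output world has at least two agents deciding $0$, and each of them knows $\Phi_1$.
\item So $\Psi=\Phi_0\lor\bigvee_{x\neq y}(K_x\Phi_1\land K_y\Phi_1)$ is positive and true throughout the output.
\item At $w$, the fan of worlds around $b$'s solo vertex edge-connects $w$ to all-$0$ worlds, so $\Phi_0$ fails; and only $b$ knows $\Phi_1$. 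So $\Psi$ is false at $w$.
\end{itemize}
Your fallback of replacing $\neg\Phi$ by an input literal (for instance $\text{all}_1$) works only in the simplified reading where $\cO$ is about inputs. In $\cI[\cO]$, everyone may decide $1$ on mixed inputs, so no input literal holds wherever $\Phi_1$ fails. $\Phi_0$ does, and it is positive.
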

\begin{proof}[Proof (sketch)]
Assume there exists a decision map $\delta : \cP \to \cO$ satisfying the task specification.
Let $\alpha = \{ \{a,b\}, \{a,c\}, \{b,c\}\}$ contain all groups of two agents, and define the two formulas $\Phi_1 = CD_\alpha (\neg \text{all}_1)$ and $\Phi_0 = CD_\alpha (\neg \text{all}_0)$.
Then, we define the following obstruction formula, intuitively, ``either $\Phi_0$ is true, or at least two agents know $\Phi_1$''.
\[\Psi \;=\; \Phi_0 \lor (K_a\,\Phi_1 \land K_b\,\Phi_1) \lor (K_a\,\Phi_1 \land K_c\,\Phi_1) \lor (K_b\,\Phi_1 \land K_c\,\Phi_1)\]
The formula $\Psi$ is false in the crosshatched world~$w$ of the protocol complex~$\cP$.
However,~$\Psi$ is true everywhere in the output complex~$\cO$ of majority consensus.
So, no matter where the simplicial map $\delta$ sends the world $w$, we always have that~$\Psi$ is true in $\delta(w)$, but false in~$w$.
This contradicts the knowledge gain theorem, so the decision map $\delta$ cannot exist.
\end{proof}

\paragraph*{In two rounds}

In the previous one-round attempt, there was only one world (in fact three, if we draw the full protocol complex) where the specification of the task is not met.
It seems that all we need to do is to separate the two faulty vertices $a_1$ and $c_1$ to make the algorithm work.
Indeed, one extra round of test-and-set suffices to solve majority consensus.
In fact, the full second round is not needed: all we need to do is an extra write/read, between the two agents that lost the first round.

\begin{center}
\begin{minipage}{0.8\textwidth}
\hrule
\smallskip 
\textbf{Test-and-set algorithm to solve majority consensus:}
\smallskip 
\hrule
\vspace{-4pt}
\begin{itemize}
\item First round: all agents write, call Test\&Set, and then read.
\item Agents decide as in the one-round case, except in the following case.
\item If agent~$i$ started with value~$1$, loses the Test\&Set, and reads values $0$ and $1$ from the other two agents, then do an extra write and read.
\item If agent~$i$ sees another value during the second round, decide $0$, otherwise decide $1$.
\end{itemize}
\smallskip 
\hrule
\end{minipage}
\end{center}
\smallskip

The protocol complex arising from this algorithm is almost the same as the one of the one-round case, except that the crosshatched triangle is further subdivided as follows:
\begin{center}
\scalebox{0.8}{
	\begin{tikzpicture}[scale=1, auto, font={\small}]
	\begin{pgfonlayer}{nodelayer}
		\node [style=bluevertex, label={\color{teal}$\mathbf{0}$}] (78) at (15, 9.75) {$b_0$};
		\node [style=yellowvertex, label={\color{purple}$\mathbf{1}$}] (93) at (0, 7) {$c_1$};
		\node [style=pinkvertex, label={\color{teal}$\mathbf{0}$}] (94) at (1, 5.25) {$a_1$};
		\node [style=yellowvertex, label={\color{teal}$\mathbf{0}$}] (95) at (2, 3.5) {$c_1$};
		\node [style=pinkvertex, label={\color{purple}$\mathbf{1}$}] (96) at (3, 1.75) {$a_1$};
	\end{pgfonlayer}
	\begin{pgfonlayer}{edgelayer}
		\draw [style=filled path] (93.center)
			 to (78.center)
			 to (96.center)
			 to (95.center)
			 to (94.center)
			 to cycle;
		\draw [style=filled path] (94) to (78);
		\draw [style=filled path] (78) to (95);
		\draw [style=double arrow] (94) to (95);
		\draw [style=arrow] (93) to (94);
		\draw [style=arrow] (96) to (95);
	\end{pgfonlayer}
\end{tikzpicture}
}
\vspace{-0.5cm}
\end{center}
The arrows indicate which values were exchanged during the second write/read between the two agents $a$ and $c$.
As we can see, all three triangles satisfy the task specification, so the algorithm is correct.

\section{Conclusion}
\label{sec:concl}

We have presented an introduction to simplicial models for an economics audience, somewhat informal, but self-contained and avoiding technical details, exposing the  interactions among knowledge, distributed computing,  and combinatorial topology.
We focused on the following notions, which deserve to be better known by this audience: simplicial models, distributed knowledge, models of communication, and tasks specifications.

First, while Kripke frame semantics is ubiquitous in the study of multi-agent systems, we described a dual semantics based on simplicial complexes, that exposes a topological structure.  We illustrated the value of this to reason about distributed knowledge, and common distributed knowledge. But there is more sophisticated topological information that can be exploited, related to the homology of the underlying simplicial complex, directly using combinatorial topology as in~\cite{herlihyetal:2013} and epistemically, as  in~\cite{Nishimura24}.

There is plenty of previous work on dynamics, to study how the knowledge of agents evolves after communicating with each other, based on Kripke models; especially close to our approach is Dynamic Epistemic Logic~\cite{sep-dynamic-epistemic,DEL}. We used communication patterns to show how a simplicial model evolves with communication.
A main goal was to show how a distributed model of communication can be defined, and how different models allow for different knowledge gain by the agents.
We presented three examples, immediate snapshots, broadcast and test-and-set.

Finally, we showed that simplicial models can also be used as a specification language, for distributed tasks,
modeling the knowledge that the agents should gain.
We stress that the task specification is independent of the communication model~\cite{Rajsbaum22tasks}.
We considered a weaker version of consensus, majority consensus, and showed how in some models the agents can indeed gain enough knowledge to solve the task, while in some other not.
These examples rely on the notion of common distributed knowledge.

There are  numerous possible extensions to this work.
Iterated knowledge has been considered in the past, and its relation to solving approximate agreement (where agents agree on values close to each other)~\cite{Armenta-SeguraL22,gandalf-journal}. We would like to do an analogous analysis for iterated distributed knowledge, and the generalization of majority consensus to approximate majority consensus~\cite{AttiyaFPR25}.
Clearly, generalizing our examples from three agents to any number of agents would be of interest.
Also, exploring the distributed knowledge technique more formally and generally, describing the tasks for which it can be applied. In particular, it would be interesting to apply it in setting where agents may fail, both by crashing or by arbitrary Byzantine failures.

\bibliography{bibliography.bib}

%
%

%

\end{document}